\documentclass[runningheads]{llncs}

\usepackage{newclude}
\usepackage{graphicx}
\usepackage{amsfonts}
\usepackage{enumerate}
\usepackage[scr]{rsfso}
\usepackage{units}
\usepackage{enumitem}
\usepackage{comment}
\usepackage{color}
\usepackage{textcomp}
\usepackage{hyperref}
\usepackage{multirow}
\usepackage{environ}
\usepackage[boxed]{algorithm2e}
\usepackage{environ}
\usepackage{comment}
\usepackage{makecell}
\usepackage[classfont=roman,funcfont=roman,langfont=roman]{complexity}
\usepackage{amsmath}
\usepackage{cleveref}
\usepackage{todonotes}
\usepackage{tcolorbox} 
\usepackage[title]{appendix}

\bibliographystyle{siam}

\newtheorem{rrule}{Reduction rule}

\renewcommand{\O}{\mathcal{O}}

\title{Tur\'{a}n's Theorem Through  Algorithmic Lens\thanks{The research leading to these results has received funding from the the Research Council of Norway via the project BWCA (grant no. 314528) and DFG Research Group ADYN via grant DFG 411362735.}}
\author{}

\newcommand{\polyn}{n^{\mathcal{O}(1)}}
\newcommand{\probTuranClique}{\textsc{Tur\'{a}n's Clique}\xspace}
\newcommand{\probTuranIS}{\textsc{Tur\'{a}n's Independent Set}\xspace}

\newcommand{\probBrookIS}{\textsc{Independent Set over Brook's bound}\xspace}

\usepackage{empheq}
\usepackage{tcolorbox}
\definecolor{blueish}{rgb}{0.122, 0.435, 0.698}
\definecolor{dagstuhlyellow}{rgb}{0.99,0.78,0.07}
\definecolor{lightgray}{rgb}{0.9,0.9,0.9}
\newtcbox{\colbox}{
	size=title,
	nobeforeafter,
	colframe=white,
	colback=lightgray,
	arc=10pt,
	tcbox raise base}


\newcommand{\defproblemu}[3]{
	\vspace{1mm}
	\noindent\fbox{
		\begin{minipage}{0.95\textwidth}
			#1 \\
			{\bf{Input:}} #2  \\
			{\bf{Question:}} #3
		\end{minipage}
	}
	\vspace{1mm}
}


\institute{Department of Informatics, University of Bergen, Norway. \email{\{fomin,petr.golovach\}@ii.uib.no} \and
	St.\ Petersburg Department of V.A.\ Steklov Institute of Mathematics, Russia.\\\email{danilka.pro@gmail.com} \and
	Hasso Plattner Institute, University of Potsdam, Germany.\\\email{kirillsimonov@gmail.com}
	}

\author{
	Fedor V. Fomin\inst{1}
	\and
	Petr A. Golovach\inst{1}
	\and
	Danil Sagunov\inst{2}
	\and
	Kirill Simonov\inst{3}
}

\authorrunning{F.~V.~Fomin et al.}

\begin{document}

	\maketitle
	
\begin{abstract}
The fundamental theorem of Tur\'{a}n  from Extremal Graph Theory determines the exact bound on the number of edges $t_r(n)$ in an $n$-vertex graph that does not contain a clique of size $r+1$. We establish an interesting link between Extremal Graph Theory and Algorithms by providing a simple compression algorithm that in linear time reduces the problem of finding a clique of size $\ell$ in an $n$-vertex graph $G$ with $m \ge t_r(n)-k$ edges, where $\ell\leq r+1$, to the problem of finding a maximum clique in a graph on at most $5k$ vertices. This also gives us an algorithm deciding in time  
 $2.49^{k}\cdot(n + m)$ whether $G$ has a clique of size $\ell$. 
  As a byproduct of the new compression algorithm,  we give an algorithm  that in time $2^{\mathcal{O}(td^2 )}  \cdot n^2$ decides whether a graph contains an independent set of size at least  $n/(d+1) +t$. Here $d$ is the average vertex degree of the graph $G$.  
  The multivariate complexity analysis  based on ETH indicates that the asymptotical dependence on several parameters in the running times of our algorithms is tight. 
 \end{abstract}

\section{Introduction}\label{sec:intro}
In 1941, P\'{a}l Tur\'{a}n published a theorem that became one of the central results in extremal graph theory. The theorem bounds the number of edges in an undirected graph that does not contain a complete subgraph of a given size.
For  positive integers $r\leq n$, the  \emph{Tur\'{a}n's graph} $T_r(n)$ is the unique complete $r$-partite $n$-vertex graph where each part consists of $\lfloor \frac{n}{r}\rfloor$ or $\lceil \frac{n}{r}\rceil$ vertices. In other words, 
	 $T_r(n)$ is isomorphic to  $K_{a_1,a_2,\ldots,a_r}$, where $a_i=\lceil\frac{n}{r}\rceil$ if $i$ is less than or equal to $n$ modulo $r$ and $a_i=\lfloor \frac{n}{r}\rfloor$ otherwise. We use $t_r(n)$  to denote the number of edges in $T_r(n)$.
	 
 \begin{theorem}[Tur\'{a}n's Theorem~\cite{MR18405}]\label{thm:turan}
	Let $r \le n$. Then any $K_{r+1}$-free $n$-vertex graph has at most $t_r(n)$ edges.
	The only $K_{r+1}$-free $n$-vertex graph with exactly $t_r(n)$ edges is $T_r(n)$.
\end{theorem}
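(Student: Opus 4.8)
The plan is to establish both assertions of \Cref{thm:turan} in one stroke via Zykov's symmetrization: I will show that \emph{every} $n$-vertex $K_{r+1}$-free graph with the maximum possible number of edges is isomorphic to $T_r(n)$. Since $T_r(n)$ is itself $K_{r+1}$-free --- any clique meets each of its $r$ parts in at most one vertex --- and has exactly $t_r(n)$ edges, this immediately yields the bound $e(G)\le t_r(n)$ for all $K_{r+1}$-free $G$, together with the fact that $T_r(n)$ is the unique extremal graph.

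First I would fix an extremal graph $G$ and prove the key structural claim: the relation on $V(G)$ given by ``$u=v$ or $uv\notin E(G)$'' is an equivalence relation; only transitivity needs work. Suppose $uv,vw\notin E(G)$ but $uw\in E(G)$ for distinct $u,v,w$. The argument has two steps. (a)~If $u\not\sim v$ then $\deg u=\deg v$: otherwise, say $\deg u>\deg v$, delete $v$ and add a \emph{twin} of $u$, i.e.\ a new vertex with neighbourhood $N(u)$; the result is still $K_{r+1}$-free (a clique through the twin yields one through $u$, since the twin and $u$ are non-adjacent and hence not both in it) but has $\deg u-\deg v>0$ more edges, contradicting maximality. (b)~By~(a), $\deg u=\deg v=\deg w$; now delete both $u$ and $w$ and add two twins of $v$. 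The result is again $K_{r+1}$-free, and the number of edges changes by $2\deg v-(\deg u+\deg w-1)=1>0$ --- the $-1$ accounting for the edge $uw$, which is removed but counted in both $\deg u$ and $\deg w$ --- once more contradicting maximality. Hence the relation is transitive.

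It follows that $G$ is complete multipartite, with parts the equivalence classes, and since $G$ is $K_{r+1}$-free there are at most $r$ of them. The last step is a routine optimization: over all complete $s$-partite graphs on $n$ vertices with $s\le r$, the edge count $\binom{n}{2}-\sum_i\binom{|V_i|}{2}$ is maximized, uniquely, by taking $s=r$ parts whose sizes differ pairwise by at most one --- that is, by $T_r(n)$. This is a smoothing/convexity argument: if some part exceeds another by at least two, shifting one vertex from the larger to the smaller strictly increases the edge count; and if there are fewer than $r$ nonempty parts, some part has size at least two and splitting it into two parts strictly increases the edge count. Therefore $G\cong T_r(n)$.

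I expect the transitivity claim --- step two above --- to be the main obstacle: one has to check carefully that the twinning operations genuinely preserve $K_{r+1}$-freeness (by verifying how cliques through the new vertices project back to cliques of $G$) and to track the change in the number of edges exactly, including the correction term for the edge $uw$. The remaining ingredients --- recognising the complete-multipartite structure and optimising the part sizes --- are standard.
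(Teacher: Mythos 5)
Your proof is correct, but it takes a genuinely different route from the one the paper builds on. The paper does not reprove \Cref{thm:turan} explicitly (it is cited), but the structural machinery it develops in \Cref{lemma:partition} is an algorithmic reworking of Erd\H{o}s' proof: repeatedly pick a maximum-degree vertex $v_i$ in the current graph $G_i$, split off $V_i = V(G_i)\setminus N_{G_i}(v_i)$, and replace the edges inside the strip by a complete join from $V_i$ to the rest. This is a local \emph{graph-modification-plus-recursion} argument whose byproduct is the nested partition $V_1,\dots,V_p$ (with the accompanying clique $v_1,\dots,v_p$) that powers the whole compression routine. Your argument is instead Zykov symmetrization: fix a globally edge-maximal $K_{r+1}$-free graph, show via the two twinning moves that non-adjacency is transitive, conclude $G$ is complete multipartite with at most $r$ parts, and then balance the part sizes. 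Both are complete and both establish uniqueness of the extremal graph, but they trade off differently. The Erd\H{o}s recursion is the right tool for the paper's purposes because it degrades gracefully when the graph is only \emph{close} to $t_r(n)$ edges --- each step still charges its ``damage'' to the budget $k$, which is exactly how \Cref{lemma:partition} bounds $|E(G)\triangle E(G')|$ by $3k$. The Zykov argument, by contrast, is genuinely extremal: the twinning move is only a contradiction when $G$ is already maximum, so it has no natural ``above-guarantee'' version and would not yield the stability/compression statement the paper needs. On the other hand, your approach gives the uniqueness of $T_r(n)$ more directly and cleanly than the Erd\H{o}s route does. Your worry about carefully tracking the $-1$ correction term for the edge $uw$ and checking $K_{r+1}$-freeness of the twinned graphs is well placed, and you handled both correctly.
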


The theorem yields a polynomial time algorithm  that for a given $n$-vertex graph $G$ with at least $t_r(n)$ edges decides whether $G$ contains a clique $K_{r+1}$. Indeed, if a graph $G$ is isomorphic to $T_r(n)$, which is easily checkable in polynomial time, then it has no clique of size ${r+1}$. Otherwise, by Tur\'{a}n's theorem, $G$ 
  contains $K_{r+1}$. There are   constructive proofs of Tur\'{a}n's theorem that also allow to find a clique of size $r+1$ in a graph with at least $t_r(n)$ edges.

 The fascinating question is whether Tur\'{a}n's theorem could help to find efficiently larger cliques in sparser graphs. There are two natural approaches to defining a ``sparser'' graph and a ``larger'' clique. These approaches bring us to the following questions; addressing these questions is the primary motivation of our work. 
   
 First, what happens when the input graph has a bit less edges than the Tur\'{a}n's graph? More precisely,
  \begin{tcolorbox}[colback=green!5!white,colframe=blue!40!black]
 Is there an efficient algorithm that for some $k\geq 1$, decides whether an $n$-vertex  graph with  at least $t_r(n)-k$ edges 
 contains a clique of size $r+1$?
 \end{tcolorbox}
 \vskip-2mm 
    
 Second, could Tur\'{a}n's theorem be useful in finding a clique of size  larger than $r+1$ in an $n$-vertex graph with $t_r(n)$ edges? That is, 
   
 \begin{tcolorbox}[colback=green!5!white,colframe=blue!40!black]
 Is there an efficient algorithm that for some $\ell> r$ decides whether an $n$-vertex  graph with  at least $t_r(n)$ edges 
 contains a clique of size $\ell$?
 \end{tcolorbox}
\vskip-2mm

 We provide answers to both questions, and more. We resolve the first question by showing a \emph{simple} fixed-parameter tractable (FPT) algorithm where the parameter is $k$, the ``distance'' to the Tur\'{a}n's graph.
 Our algorithm builds on the cute ideas used by Erd\H{o}s in his proof of Tur\'{a}n's theorem~\cite{MR307975}. Viewing these ideas through algorithmic lens leads us to a simple preprocessing procedure, formally a linear-time polynomial compression. For the second question, unfortunately, the answer is negative. 

\smallskip
\noindent\textbf{Our contribution.} To explain our results, it is convenient to state the above questions in terms of  the computational complexity of the following problem.

\defproblemu{\probTuranClique}{An $n$-vertex graph $G$,  positive integers   $r, \ell \le n$,  and $k$ such that $|E(G)|\ge t_r(n)-k$.}{Is there a clique of size at least $\ell$ in $G$?}

Our first result is the following theorem (\Cref{theorem:kernel}).
Let $G$ be an $n$-vertex graph  with $m \ge t_r(n)-k$ edges. Then there is an algorithm that 
 for any $\ell\leq r+1$, in time   $2.49^{k}\cdot(n + m)$ either finds 
  a clique of size at least $\ell$ in $G$ or correctly reports that $G$ does not have a clique of size $\ell$. Thus for $\ell\leq r+1$, 
  \probTuranClique 
  is FPT parameterized by $k$. More generally, we prove that the problem admits a  compression of size linear in $k$. That is, we provide a linear-time procedure that reduces an   instance $(G, r,\ell, k)$  of  \probTuranClique to an equivalent instance $(G',p)$ of the \textsc{Clique} problem with at most $5k$ vertices. The difference between \textsc{Clique} and  \probTuranClique is that we do not impose any bound on the number of edges in the input graph of \textsc{Clique}. This is why we use the term compression rather than kernelization,\footnote{A \emph{kernel} is by definition a reduction to an instance of the same problem. See the book \cite{fomin2019kernelization} for an introduction to kernelization.} and we argue that stating our reduction in terms of compression is far more natural and helpful.
   Indeed, after reducing the instance to the size linear in the parameter $k$, the difference between \textsc{Clique} and \probTuranClique vanes, as even the total number of edges in the instance is automatically bounded by a function of the parameter. On the other hand, \textsc{Clique} is a more general and well-studied problem than \probTuranClique.
 
   Pipelined with the fastest known exact algorithm for \textsc{Maximum Independent Set} of running time $\O(1.1996^n)$
\cite{XiaoN17}, our reduction provides the FPT algorithm for  \probTuranClique parameterized by $k$. 
This algorithm is single-exponential in $k$ and linear in $n+m$, and we also show that the existence of an algorithm subexponential in $k$ would contradict Exponential Time Hypothesis (\Cref{cor:single_tight}). Thus the running time of our algorithm is essentially tight, up to the constant in the base of the exponent.

The condition  $\ell\leq r+1$ required by our algorithm is, unfortunately, unavoidable. We prove (\Cref{thm:TuranNPC})  that for any  fixed $p\ge 2$, the problem of deciding whether an $n$-vertex graph  with at least  $t_r(n)$ edges contains a clique of size $\ell=r+p$ is \NP-complete. Thus for any  $p\ge 2$, \probTuranClique parameterized by $k$ is para-NP-hard. (We refer to the book of Cygan et al. \cite{cygan2015parameterized}  for an introduction to parameterized complexity.)

While our hardness result rules out finding cliques of size $\ell> r+1$ in graphs with $t_r(n)$ edges in FPT time, an interesting  
 situation arises when the ratio $\xi:= \lfloor\frac{n}{r}\rfloor $ is small. In the extreme case, when $n=r$,  the $n$-vertex graph $G$ with  $t_r(n)=n(n-1)/2$  is a complete graph. In this case the problem  becomes trivial. 

 
 To capture how far the desired clique is from the Tur\'{a}n's bound, we introduce the parameter
\[
\tau= \begin{cases}
      0, & \text{ if } \ell \leq r, \\
      \ell - r, & \text{otherwise}.
\end{cases}
\]
The above-mentioned compression algorithm into 
\textsc{Clique} with at most $5k$ vertices  yields almost ``for free''  a compression  of   \probTuranClique
 into 
\textsc{Clique} with  $\mathcal{O}(\tau\xi^2+k)$ vertices. Hence for any $\ell$, one can decide whether an  $n$-vertex   graph with $m \ge t_r(n)-k$ edges contains a clique of size $\ell$   in time  $2^{\mathcal{O}(\tau\xi^2+k)}  \cdot (n + m)$. Thus the problem is FPT parameterized by $\tau+\xi +k$. This result has an interesting interpretation  when  we 
look for a large independent set in the complement of a graph.
Tur\'{a}n's  theorem, when applied to the complement $\overline{G}$ of a graph $G$, yields a  bound
 \[
 \alpha({G})\geq \frac{n}{d + 1},
 \]
 where $\alpha(G)$ is the size of the largest independent set in $G$ (the independence number of $G$), and $d$ is the average vertex degree of $G$. 
 This motivates us to define the following problem. 
 
 \defproblemu{\probTuranIS}{An $n$-vertex graph $G$ with average degree $d$,  a positive integer $t$.}{Is there an independent set  of size at least $\frac{n}{d+1}+t$ in $G$?}
 
By \Cref{thm:tauxik},  we have a simple algorithm (\Cref{cor:comprBrookcompr}) that compresses an instance of 
\probTuranIS   into  an instance of  \textsc{Independent Set} with   $ \mathcal{O}(td^2 )$ vertices. Pipelined with an exact algorithm computing a maximum independent set, the compression results in the algorithm solving  \probTuranIS in time
    $2^{\mathcal{O}(td^2 )}  \cdot n^2$.

As we already mentioned, \probTuranClique is \NP-complete  for any fixed $\tau \ge 2$ and $k=0$. We prove that  the problem remains intractable being parameterized by any pair of the parameters from the triple  $\{\tau, \xi, k\}$.
More precisely,  \probTuranClique is also 
 \NP-complete for any fixed $\xi\ge 1$ and $\tau=0$, as well as for
 any fixed $\xi\ge 1$ and $k=0$. These lower bounds are given in \Cref{thm:TuranNPC}.

 Given the algorithm of running time  $2^{\mathcal{O}(\tau\xi^2+k)}  \cdot (n + m)$ and the lower bounds for parameterization by any pair of the parameters from $\{\tau, \xi, k\}$, a natural question is, what is the optimal dependence of a \probTuranClique algorithm on  $\{\tau, \xi, k\}$? We use the Exponential Time Hypothesis (ETH) of Impagliazzo,  Paturi,  and Zane \cite{ImpagliazzoPZ01} to address this question. Assuming ETH, we rule out  the existence of algorithms solving \probTuranClique in time 
 $f(\xi,\tau)^{o(k)}\cdot n^{f(\xi,\tau)}$,  $f(\xi,k)^{o(\tau)} \cdot n^{f(\xi, k)}$, and   $f(k,\tau)^{o(\sqrt{\xi})}\cdot n^{f(k, \tau)}$, for any function $f$ of the respective parameters.
 
\smallskip
\noindent\textbf{Related work.}
\textsc{Clique} is a notoriously difficult computational problem. It is one of Karp's 21 \NP-complete problems~\cite{Karp72}
and by the work of H\aa stad, it is hard to approximate \textsc{Clique} within a factor of
$n^{1-\epsilon}$~\cite{MR1687331}. \textsc{Clique} parameterized by the solution size is W[1]-complete \cite{DowneyF99}. The problem plays the fundamental role in the W-hierarchy of Downey and Fellows,  and serves as the starting point in the majority of parameterized hardness reductions. From the viewpoint of structural parameterized kernelization, \textsc{Clique} does not admit a polynomial kernel when parameterized by the size of the vertex cover~\cite{BodlaenderJK14}.  A notable portion of works in parameterized algorithms and  kernelization is devoted to solving \textsc{Independent Set} (equivalent to \textsc{Clique} on the graph's complement) on specific graph classes like planar, $H$-minor-free graphs and nowhere-dense graphs
  \cite{DemaineFHT05jacm,BodlaenderFLPST16,DvorakM17,PilipczukS21}.
   
  Our algorithmic study of  Tur\'{a}n's theorem  fits into the paradigm of the ``above guarantee'' parameterization~\cite{MahajanR99}. 
 This approach was 
 successfully applied to various problems, see e.g.~\cite{AlonGKSY10,CrowstonJMPRS13,GargP16,DBLP:journals/mst/GutinKLM11,GutinIMY12,GutinP16,GutinRSY07,LokshtanovNRRS14,MahajanRS09,DBLP:conf/wg/Jansen0N19,fomin_et_al:LIPIcs:2019:11168}.

Most relevant to our work is the 
 work of Dvorak and Lidicky on independent set ``above Brooks' theorem'' \cite{DvorakL16}.  By Brooks' theorem~\cite{Brooks41}, every $n$-vertex graph of maximum degree at most $\Delta\geq  3$ and clique number at most $\Delta$  has an independent set of size at least $n/\Delta$. Then the 
 \probBrookIS problem
 is to decide whether an input graph $G$ has an independent set of size at least  $\frac{n}{\Delta} +p$. 
  Dvorak and Lidicky \cite[Corollary~3]{DvorakL16} proved  that \probBrookIS  admits a kernel with at most $114p\Delta^3  $ vertices.  This  kernel  also implies an algorithm of running time   $2^{\mathcal{O}(p\Delta^3 )}  \cdot\polyn$. 
 When average degree $d$ is at most $\Delta -1$,   by \Cref{cor:comprBrookcompr}, we have that  \probBrookIS admits a   compression into an instance of \textsc{Independent Set} with  $\mathcal{O}(p\Delta^2)$ vertices. Similarly, 
 by  \Cref{cor:comprBrookalgo}, for  $d \leq \Delta  -1$, \probBrookIS is solvable in time $2^{\mathcal{O}(p\Delta^2 )}  \cdot\polyn$.  When  $d> \Delta -1$,   for example, on regular graphs, the result of Dvorak and Lidicky is non-comparable with our results.

	\section{Algorithms}\label{sec:algorithms}
	%
	While in the literature it is common to present Tur\'{a}n's theorem under the implicit assumption that $n$ is divisible by $r$,
	here we make no such assumption. For that, it is useful to recall the precise value of $t_r(n)$ in the general setting, as observed by Tur\'{a}n~\cite{MR18405}.
	
	\begin{proposition}[Tur\'{a}n~\cite{MR18405}]\label{lemma:turans_edges}
		For   positive integers $r \le n$, 
		$$t_r(n)=\left(1-\frac{1}{r}\right)\cdot\frac{n^2}{2}-\frac{s}{2}\cdot\left(1-\frac{s}{r}\right)$$
		where $s=n-r\cdot\lfloor\frac{n}{r}\rfloor$ is the remainder in the division of $n$ by $r$.
	\end{proposition}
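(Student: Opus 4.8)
The plan is to compute the number of edges of $T_r(n)$ directly from its description as a complete multipartite graph and then simplify. Write $q=\lfloor\frac{n}{r}\rfloor$, so that $n=qr+s$ with $0\le s<r$; by definition $T_r(n)=K_{a_1,\dots,a_r}$ has exactly $s$ parts of size $q+1$ and $r-s$ parts of size $q$. The only edges \emph{missing} from the complete graph $K_n$ are those lying inside a part, so
\[
t_r(n)=\binom{n}{2}-\sum_{i=1}^{r}\binom{a_i}{2}
     =\binom{n}{2}-s\binom{q+1}{2}-(r-s)\binom{q}{2}.
\]

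First I would evaluate the correction term. Expanding the binomials,
\[
s\binom{q+1}{2}+(r-s)\binom{q}{2}
=\frac{q}{2}\bigl(s(q+1)+(r-s)(q-1)\bigr)
=\frac{q}{2}\bigl(r(q-1)+2s\bigr).
\]
Now I would use $n=qr+s$, i.e.\ $q=\frac{n-s}{r}$ and $rq=n-s$, which gives $r(q-1)+2s=n+s-r$; hence the correction term equals $\frac{(n-s)(n+s-r)}{2r}$, whose numerator expands to $n^2-nr-s^2+sr$.

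Finally I would combine the two contributions over the common denominator $2r$:
\[
t_r(n)=\frac{r(n^2-n)-\bigl(n^2-nr-s^2+sr\bigr)}{2r}
      =\frac{(r-1)n^2-s(r-s)}{2r},
\]
and splitting this single fraction yields exactly $\left(1-\frac1r\right)\cdot\frac{n^2}{2}-\frac{s}{2}\cdot\left(1-\frac{s}{r}\right)$, as claimed. The whole argument is elementary; the only points requiring care are keeping track of how many parts have size $q$ versus $q+1$ and correctly applying the identity $n=qr+s$, so I do not expect any genuine obstacle beyond the bookkeeping.
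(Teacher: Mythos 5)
Your proof is correct: you count edges of $T_r(n)=K_{a_1,\dots,a_r}$ by subtracting the non-edges inside each of the $s$ parts of size $q+1$ and the $r-s$ parts of size $q$ from $\binom{n}{2}$, and the algebra simplifying the correction term to $\frac{(n-s)(n+s-r)}{2r}$ and then to $\left(1-\frac1r\right)\frac{n^2}{2}-\frac{s}{2}\left(1-\frac{s}{r}\right)$ all checks out. The paper does not supply its own proof of this proposition, citing Turán instead and merely remarking that the stated formula is equivalent to Turán's form $\frac{r-1}{2r}(n^2-s^2)+\binom{s}{2}$; your direct derivation is the natural elementary argument and in fact also makes that equivalence transparent, since your intermediate expression $\frac{(r-1)n^2-s(r-s)}{2r}$ expands to both.
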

	Note that \cite{MR18405} uses the expression $t_r(n) = \frac{r - 1}{2r} \cdot (n^2 - s^2) + \binom{s}{2}$, however it can be easily seen to be equivalent to the above.
	We start with our main problem, where we look for a $K_{r + 1}$ in a graph that has slightly less than $t_r(n)$ edges.
	Later in this section, we show how to derive our other algorithmic results from the compression routine developed next.
	
	\subsection{Compression algorithm for $\ell \le r + 1$}
	
	First, we make a crucial observation on the structure of a \probTuranClique instance that will be the key part of our compression argument.
	Take a vertex $v$ of maximum degree in $G$, partition $V(G)$ on $S = N_G(v)$ and $T = V(G) \setminus S$, and add all edges between $S$ and $T$ while removing all edges inside $T$. It can be argued that this operation does not decrease the number of edges in $G$ while also preserving the property of being $K_{r + 1}$-free. Performing this recursively yields that $T_r(n)$ has indeed the maximum number of edges for a $K_{r + 1}$-free graph, and this is the gist of Erd\H{o}s' proof of Tur\'{a}n's Theorem~\cite{MR307975}. Now, we want to extend this argument to cover our above-guarantee case. Again, we start with the graph $G$ and perform exactly the same recursive procedure to obtain the graph $G'$. While we cannot say that $G'$ is equal to $G$, since the latter has slightly less than $t_r(n)$ edges, we can argue that every edge that gets changed from $G$ to $G'$ can be attributed to the ``budget'' $k$. Thus we arrive to the conclusion that $G$ is different from $G'$ at only $\mathcal{O}(k)$ places.
	The following lemma makes this intuition formal.
	
	
	\begin{lemma}\label{lemma:partition}
		There is an $\mathcal{O}(m + k)$-time algorithm that  for   non-negative integers $k \ge 1$, $r \ge 2$ and an $n$-vertex graph $G$ with $m \ge t_r(n)-k$ edges, finds a partition $V_1, V_2, \ldots, V_p$ of $V(G)$ with the following properties
		\begin{enumerate} 
			\item[$(i)$] $p \ge r-k$;
			\item[$(ii)$] For each $i\in \{1,\dots, p\}$, there is a vertex $v_i \in V_i$ with $N_G(v_i) \supset  V_{i+1}\cup V_{i+2} \cup \cdots \cup V_p$;
			\item[$(iii)$] If $p \le r$, then for the complete $p$-partite graph $G'$ with parts $V_1, V_2, \ldots, V_p$, we have $|E(G')|\ge |E(G)|$ and $|E(G)\triangle E(G')|\le 3k$.
			Moreover, all vertices covered by $E(G)\setminus E(G')$ are covered by $E(G')\setminus E(G)$ and $|E(G')\setminus E(G)|\le 2k$.
		\end{enumerate}
	\end{lemma}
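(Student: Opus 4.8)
The plan is to realize Erd\H{o}s' symmetrization as an explicit recursive peeling procedure and then charge every change it makes against the budget $k$. Concretely, I would maintain a set of \emph{active} vertices, starting with $W_1 = V(G)$; in round $i$ I pick a vertex $v_i$ of maximum degree in $G_i := G[W_i]$, put $W_{i+1} := N_{G_i}(v_i)$ and $V_i := W_i \setminus W_{i+1}$ (so $v_i \in V_i$), and continue on $W_{i+1}$, stopping with $p := i$ once $W_{i+1} = \emptyset$. All $V_i$ are nonempty since $v_i \in V_i$, and $V_{i+1}\cup\cdots\cup V_p = W_{i+1} = N_{G_i}(v_i) = N_G(v_i)\cap W_i \subseteq N_G(v_i)$, which is exactly property $(ii)$. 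The complete $p$-partite graph $G'$ on $V_1,\dots,V_p$ is precisely the graph obtained from $G$ by, in each round $i$, deleting all edges inside $V_i$ and inserting all missing edges between $V_i$ and $W_{i+1}$; crucially these edge sets are pairwise disjoint over the rounds, so all the bookkeeping below is a sum over rounds. (Throughout we may assume $r\le n$, as the lemma is applied to instances of \probTuranClique.)

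For the per-round estimate, fix $i$, let $d_i = |W_{i+1}| = \deg_{G_i}(v_i)$, and let $a_i$, $b_i$ be the number of $G_i$-edges inside $V_i$, respectively between $V_i$ and $W_{i+1}$. Since $v_i$ has maximum degree, every $u\in V_i$ satisfies $\deg^{G_i}_{W_{i+1}}(u)\le d_i - \deg^{G_i}_{V_i}(u)$; summing over $u\in V_i$ gives $b_i \le |V_i|d_i - 2a_i$, i.e.\ round $i$ \emph{inserts} $|V_i|d_i - b_i \ge 2a_i$ edges and \emph{deletes} $a_i$ edges. Writing $A := |E(G')\setminus E(G)| = \sum_i(|V_i|d_i - b_i)$ and $D := |E(G)\setminus E(G')| = \sum_i a_i$, we get $A \ge 2D$ and $|E(G')| - |E(G)| = A - D \ge D \ge 0$, which is the first inequality in $(iii)$ and also yields $|E(G')| \ge |E(G)| \ge t_r(n) - k$. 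Moreover, any $u\in V_i$ incident to a deleted edge has $\deg^{G_i}_{V_i}(u)\ge 1$, hence is adjacent in $G'$ to all $d_i\ge 1$ vertices of $W_{i+1}$ but to at most $d_i - 1$ of them in $G$, so $u$ is incident to an inserted edge; this is the ``covered'' statement.

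To finish I combine these with Tur\'{a}n's Theorem. The graph $G'$ is complete $p$-partite, hence $K_{p+1}$-free, so $|E(G')| \le t_p(n)$ by \Cref{thm:turan}. A short argument shows $t_j(n) \ge t_{j-1}(n)+1$ for $2\le j\le n$ (split an oversized part of $T_{j-1}(n)$ into two nonempty parts, add the $\ge 1$ missing edges, and apply \Cref{thm:turan} to the resulting $K_{j+1}$-free graph), so $t_r(n) - t_p(n) = \sum_{j=p+1}^{r}\bigl(t_j(n)-t_{j-1}(n)\bigr) \ge r-p$ whenever $1\le p\le r\le n$. If $p < r-k$ (so in particular $p<r$), this forces $t_r(n) - k \le |E(G')| \le t_p(n) \le t_r(n) - (r-p) \le t_r(n) - k - 1$, a contradiction; together with the trivial case $r\le k+1$ (where $p\ge 1\ge r-k$) this gives $(i)$. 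When $p\le r$ we instead use $t_p(n)\le t_r(n)$ to get $A - D = |E(G')| - |E(G)| \le t_r(n) - (t_r(n)-k) = k$, which combined with $A\ge 2D$ gives $D\le k$, hence $A\le 2k$ and $|E(G)\triangle E(G')| = A+D \le 3k$, completing $(iii)$.

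For the running time I maintain current degrees in a bucket queue; since each $G_{i+1}$ is an induced subgraph of $G_i$ the maximum degree is non-increasing, so extracting all the $v_i$ costs $\mathcal{O}(n)$ amortized via a single descending pointer, while each degree update is triggered by an edge between some $V_i$ and $W_{i+1}$, of which there are $\sum_i b_i \le m$. Scanning the adjacency lists of the vertices of the $V_i$'s (to build each $W_{i+1}$ and to perform the updates) costs $\sum_i\sum_{u\in V_i}\deg_G(u) = 2m$, and there are at most $n$ rounds; since $r\ge 2$ gives $\lfloor n^2/4\rfloor = t_2(n)\le t_r(n)\le m+k$, i.e.\ $n = \mathcal{O}(\sqrt{m+k})$, the total is $\mathcal{O}(m+k)$. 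The structural heart of the proof is nothing more than Erd\H{o}s' symmetrization; the points that need care are exactly (a) the running time being linear in $m+k$ rather than in $n+m$, which rests entirely on the input promise bounding $n=\mathcal{O}(\sqrt{m+k})$, and (b) getting the constants $3k$ and $2k$ right, which comes from pairing the ``insertions $\ge 2\cdot$ deletions'' inequality with the Tur\'{a}n edge count.
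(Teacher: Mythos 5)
Your proof is correct and follows essentially the same route as the paper: the greedy max-degree peeling (Erd\H{o}s' symmetrization), the per-round inequality $|V_i|d_i - b_i \ge 2a_i$ (equivalent to the paper's claim $|E(G')|-|E(G)|\ge \sum_i |E(G[V_i])|$), Tur\'an's theorem to bound $|E(G')|\le t_p(n)$, the increment $t_j(n)\ge t_{j-1}(n)+1$ for item $(i)$, and the same arithmetic combining $A-D\le k$ with $A\ge 2D$ for item $(iii)$. The only cosmetic differences are that you phrase the ``covered'' statement as a local per-round argument rather than via the global degree inequality $\deg_G(u)\le\deg_{G'}(u)$, and you spell out the running-time bookkeeping a bit more than the paper does; neither changes the substance.
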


	Let us clarify this technical definition.
	The lemma basically states that if a graph $G$ has at least $t_r(n)-k$ edges, then it either has a clique of size $r+1$, or it has at most $3k$ edit distance to a complete multipartite graph $G'$ consisting of $p \in [r-k,r]$ parts.
	Moreover, $G$ has a clique of size $p$ untouched by the edit, i.e.\ this clique is present in the complete $p$-partite graph $G'$ as well.
	
	We should also note that \Cref{lemma:partition} is close to the concept of stability of Tur\'{a}n's theorem.
	This concept received much attention in extremal graph theory (see e.g.\ recent work of Kor\'{a}ndi et al.\ \cite{korandi2021exact}), and appeals the structural properties of graphs having number of edges close to the Tur\'{a}n's number $t_r(n)$.
	\Cref{lemma:partition} can also be seen as a stability version of Tur\'{a}n's theorem, but from the algorithmic point of view.
	We move on to the proof of the lemma.
	
	\begin{proof}[of \Cref{lemma:partition}]
		First, we state the algorithm, which follows from the  Erd\H{o}s' proof of Tur\'{a}n's Theorem from~\cite{MR307975}.
		We start with an empty graph $G'$ defined on the same vertex set as $G$, and set $G_1 = G$.
		Then we select the vertex $v_1 \in V(G_1)$ as an arbitrary maximum-degree vertex in $G_1$, i.e.\ $\deg_{G_1}(v_1)=\max_{u\in V(G_1)}\deg_{G_1}(u)$.
		We put $V_1=V(G_1)\setminus N_{G_1}(v_1)$ and add to $G'$ all edges between $V_1$ and $V(G_1)\setminus V_1$.
		
		We then put $G_2:=G_1-V_1$ and, unless $G_2$ is empty, apply the same process to $G_2$.
		That is, we select $v_2 \in V(G_2)$ with $\deg_{G_2}(v_2)=\max_{u \in  V(G_2)}\deg_{G_2}(u)$ and put $V_2=V(G_2)\setminus N_{G_2}(v_2)$ and add all edges between $V_2$ and $V(G_2)\setminus V_2$ to $G'$.
		We repeat this process with $G_{i+1}:=G_i-V_i$ until $G_{i+1}$ is empty.
		The process has to stop eventually as each $V_i$ is not empty.
		In this way three sequences are produced: $G=G_1,G_2,\ldots, G_p, G_{p + 1}$, where $G_1$ is $G$ and $G_{p+1}$ is the empty graph; $v_1,v_2,\ldots, v_p$,  and $V_1, V_2, \ldots, V_p$.
		Note that the sequences $\{v_i\}$ and $\{V_i\}$ satisfy property $(ii)$ by construction.
		Observe that this procedure can be clearly performed in time $\mathcal{O}(n^2)$, and for any $r \ge 2$, $m + k = t_r(n) = \Theta(n^2)$, thus the algorithm takes time $\mathcal{O}(m + k)$.
		
		Clearly, $G'$ is a complete $p$-partite graph with parts $V_1, V_2, \ldots, V_p$ as in $G'$ we added all edges between $V_i$ and $V(G_i)\setminus V_i=(V_{i+1}\cup V_{i+2}\cup\ldots\cup V_p)$ for each $i\in \{1,\dots, p\}$ and never added an edge between two vertices in the same $V_i$.
		Since a $p$-partite graph is always $K_{p + 1}$-free, by \Cref{thm:turan} $|E(G')|\le t_p(n)$.
		
		\begin{claim}
			$|E(G')|-|E(G)|\ge \sum_{i=1}^p |E(G[V_i])|$ and for each $u \in V(G)$, $\deg_G(u)\le \deg_{G'}(u)$.
		\end{claim}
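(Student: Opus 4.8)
The plan is to prove the two assertions of the claim separately, both leaning on a single structural fact: each $v_i$ is a maximum-degree vertex of $G_i$. Fix $i\in\{1,\dots,p\}$ and abbreviate $n_i=|V(G_i)|$. By the choice of $v_i$ together with the definition $V_i=V(G_i)\setminus N_{G_i}(v_i)$, every $u\in V_i$ satisfies $\deg_{G_i}(u)\le\deg_{G_i}(v_i)=n_i-|V_i|$. I would also record at the outset that $G_i=G-(V_1\cup\cdots\cup V_{i-1})$, so the recursive vertex deletions never touch $V_i$ before step $i$; in particular $G_i[V_i]=G[V_i]$, and the edges of $G$ at a vertex $u\in V_i$ that are missing in $G_i$ are exactly those joining $u$ to $V_1\cup\cdots\cup V_{i-1}$.

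For the degree inequality I would argue vertex by vertex. For $u\in V_i$, the observation above gives
\[
\deg_G(u)=\deg_{G_i}(u)+\bigl|N_G(u)\cap(V_1\cup\cdots\cup V_{i-1})\bigr|\le(n_i-|V_i|)+(|V_1|+\cdots+|V_{i-1}|)=n-|V_i|.
\]
Since $G'$ is the complete $p$-partite graph with parts $V_1,\dots,V_p$, we have $\deg_{G'}(u)=n-|V_i|$, and hence $\deg_G(u)\le\deg_{G'}(u)$.

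For the edge inequality I would track both $G'$ and the shrinking graph across the $p$ steps. At step $i$ the algorithm inserts into $G'$ exactly the $|V_i|\cdot(n_i-|V_i|)$ edges between $V_i$ and $V(G_i)\setminus V_i$, and these are all new, since every previously inserted edge has an endpoint in $V_1\cup\cdots\cup V_{i-1}$ whereas the new ones lie inside $V(G_i)$. Simultaneously, forming $G_{i+1}=G_i-V_i$ deletes precisely the edges of $G_i$ incident to $V_i$, of which there are $\sum_{u\in V_i}\deg_{G_i}(u)-|E(G_i[V_i])|=\sum_{u\in V_i}\deg_{G_i}(u)-|E(G[V_i])|$, the subtraction correcting for the edges inside $V_i$ being counted twice in the degree sum. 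Using $\sum_{u\in V_i}\deg_{G_i}(u)\le|V_i|\cdot(n_i-|V_i|)$, the net change contributed by step $i$ is
\[
|V_i|\cdot(n_i-|V_i|)-\Bigl(\sum_{u\in V_i}\deg_{G_i}(u)-|E(G[V_i])|\Bigr)\ge|E(G[V_i])|.
\]
Summing over $i=1,\dots,p$: the edges added to $G'$ total $|E(G')|$ (it starts empty and nothing is ever removed), the edges removed from the shrinking graph telescope to $|E(G_1)|-|E(G_{p+1})|=|E(G)|$, and therefore $|E(G')|-|E(G)|\ge\sum_{i=1}^{p}|E(G[V_i])|$.

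I expect the only genuinely delicate point to be the edge-counting bookkeeping in the last paragraph: one must count the edges inside $V_i$ exactly once when deleting them (they contribute twice to $\sum_{u\in V_i}\deg_{G_i}(u)$) and verify that $G_i[V_i]$ really equals $G[V_i]$, so that the leftover term is exactly the $|E(G[V_i])|$ appearing in the claim. Everything else---the per-vertex degree bound and the telescoping sums---follows directly from the construction and the maximality of each $v_i$.
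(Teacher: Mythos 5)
Your proof is correct and follows essentially the same route as the paper's: bound the per-step gain $|E_i|-|E(G_i)\setminus E(G_{i+1})|\ge|E(G[V_i])|$ using the maximality of $\deg_{G_i}(v_i)$ and then sum, and for the degree claim compare $\deg_G(u)\le\deg_{G_i}(u)+\sum_{j<i}|V_j|$ against $\deg_{G'}(u)=n-|V_i|$. The only differences are notational (you write $n_i-|V_i|$ where the paper writes $d_i$), and you make explicit the telescoping and the fact that $G_i[V_i]=G[V_i]$, which the paper leaves implicit.
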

		\begin{proof}[of Claim]
			For each $i \in \{1,\dots, p\}$, denote by $E_i$ the edges of $G'$ added in the $i$-{th} step of the construction.
			Formally, $E_i=V_i\times (V_{i+1}\cup V_{i+2}\cup\ldots\cup V_p)$ for $i < p$ and $E_p = \emptyset$.
			We aim to show that $|E_i|-|E(G_i)\setminus E(G_{i + 1})|\ge |E(G[V_i])|$.
			The first part of the claim will follow as $|E(G')|=\sum_{i=1}^{p} |E_i|$ and $|E(G)|=\sum_{i=1}^p |E(G_i)\setminus E(G_{i + 1})|$.
			
			Denote by $d_i$ the degree of $v_i$ in $G_i$.
			Since $N_{G_i}(v_i)=(V_{i+1}\cup V_{i+2}\cup\ldots\cup V_p)$, $|E_i|=d_i|V_i|$.
			As $v_i$ is a maximum-degree vertex in $G_i$, $d_i\ge \deg_{G_i}(u)$ for every $u\in V_i$, so $|E_i|\ge \sum_{u\in V_i}\deg_{G_i}(u)$.
			Recall that $G_{i + 1} =G_i-V_i$.
			Then 
			\begin{align*}
			|E(G_i)\setminus E(G_{i + 1})|=&\sum_{u\in V_i}\deg_{G_i}(u)-|E(G_i[V_i])|=\sum_{u\in V_i}\deg_{G_i}(u)-|E(G[V_i])|\\
			\le &|E_i|-|E(G[V_i])|,
			\end{align*}
			and the first part of the claim follows.
			
			To show the second part, note that for a vertex $u \in V_i$, $\deg_G(u)\le \sum_{j=1}^{i-1}|V_j|+\deg_{G_i}(u)$.
			On the other hand, $u$ is adjacent to every vertex in $V_1\cup V_2\cup\cdots \cup V_{i-1} \cup V_{i+1} \cup\cdots \cup V_p$ in $G'$.
			We have already seen that $|V_{i+1} \cup\cdots \cup V_p|\ge \deg_{G_i}(u)$.
			Thus, $\deg_G(u)\le \deg_{G'}(u)$.
			Proof of the claim is complete.
		\qed\end{proof}
		
		The claim yields  that $|E(G)|\le t_p(n)$, so $t_p(n)\ge t_r(n)-k$.
		By  \Cref{thm:turan}, we have that $t_i(n)>t_{i-1}(n)$, as $T_{i-1}(n)$ is distinct from $T_i(n)$, so $t_i(n)\ge t_{i-1}(n)+1$ for every $i \in [n]$.
		Hence if $r\ge p$ then $k\ge t_r(n)-t_p(n)\ge r-p$.
		It concludes the proof of $(i)$.
		
		It is left to prove  $(iii)$, i.e. that  $|E(G)\triangle E(G')|\le 3k$ under assumption $p\le r$.
		First note that $E(G)\setminus E(G')=\bigcup E(G[V_i])$.
		Second, since $|E(G')|\le t_p(n)\le t_r(n)$ and $|E(G)|\ge t_r(n)-k$, $|E(G')|-|E(G)|\le k$.
		By Claim, we have that $|E(G')|-|E(G)| \ge \sum|E(G[V_i])|.$
		Finally  
		\begin{align*}
		|E(G)\triangle E(G')|=&|E(G')|-|E(G)|+2|E(G)\setminus E(G')|\\=&|E(G')|-|E(G)|+2\sum |E(G[V_i])|\le 3k.
		\end{align*}
		
		By Claim, each vertex covered by $E(G)\setminus E(G')$ is covered by $E(G')\setminus E(G)$.
		The total size of these edge sets is at most $3k$, while $|E(G')\setminus E(G)|-|E(G)\setminus E(G')|=|E(G')|-|E(G)|\le k$.
		Hence, the size of $|E(G)\setminus E(G')|$ is at most $2k$.
		This concludes the proof of $(iii)$ and of the lemma.
	\qed\end{proof}
	%
	
	We are ready to prove our main algorithmic result.  Let us recall that we seek a clique of size $\ell$ in an $n$-vertex graph with $t_r(n)-k$ edges, and that 
	$\tau =\max\{\ell-r,0\}$.

	\begin{theorem}\label{theorem:kernel}
		\probTuranClique~with $\tau\in\{0,1\}$ admits an $\mathcal{O}(n + m)$-time compression into \textsc{Clique} on at most $5k$ vertices.
	\end{theorem}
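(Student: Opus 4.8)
The plan is to apply \Cref{lemma:partition} as a black box and then reason about how a clique of size $\ell$ can interact with the near-multipartite structure it produces. First I would run the algorithm of \Cref{lemma:partition} in time $\mathcal{O}(m+k)$ to obtain the partition $V_1,\dots,V_p$ with properties $(i)$--$(iii)$. There is an immediate dichotomy on $p$: if $p \ge r+1$, then property $(ii)$ hands us a clique $\{v_1,v_2,\dots,v_{r+1}\}$ of size $r+1 \ge \ell$ directly (each $v_i$ is adjacent to all later $v_j$), so we output YES via a trivial constant-size instance. Hence we may assume $p \le r$, and by $(i)$ we have $r-k \le p \le r$, i.e.\ $p \ge r-k$.

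Now assume $p \le r$, so $G'$ is a complete $p$-partite graph with $|E(G)\triangle E(G')| \le 3k$. Let $W$ be the set of vertices incident to at least one edge of $E(G)\triangle E(G')$; by property $(iii)$, every vertex of $W$ that is incident to a removed edge (one in $E(G)\setminus E(G')$) is also incident to an added edge (one in $E(G')\setminus E(G)$), and $|E(G')\setminus E(G)| \le 2k$, so $W$ has size at most $2\cdot 2k + 2\cdot (\text{extra from removed edges})$, but in fact $|W| \le 2\cdot|E(G')\setminus E(G)| + 2\cdot|E(G)\setminus E(G)| \le 4k$ — more carefully, since removed-edge endpoints are a subset of added-edge endpoints, $W$ is exactly the set of endpoints of $E(G')\setminus E(G)$ together with endpoints of $E(G)\setminus E(G')$, all contained in the endpoint set of $E(G')\setminus E(G)$, giving $|W| \le 2|E(G')\setminus E(G)| \le 4k$. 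Outside $W$, the graphs $G$ and $G'$ agree. The key structural observation is: any clique $C$ in $G$ can contain at most one vertex from each part $V_i$ \emph{except} for vertices in $W$ — precisely, $C \setminus W$ is a clique in $G'$, hence picks at most one vertex per part $V_i$.

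This suggests the compression: we keep the $\le 4k$ vertices of $W$ explicitly, and we need to record, for the rest, just enough to extend a clique among $W$-vertices. For each part $V_i$, the vertices in $V_i \setminus W$ are pairwise non-adjacent in $G$ (since $G$ agrees with $G'$ there, and $G'$ has no edges inside a part) but each is adjacent in $G$ to every vertex of $V_j \setminus W$ for $j \ne i$. So from the point of view of building a clique, all vertices of $V_i\setminus W$ are ``interchangeable'' among themselves; what distinguishes them is their adjacency pattern to $W$. A clique of size $\ell$ in $G$ uses some subset $C_W \subseteq W$ together with at most one vertex per part from $V\setminus W$, and a part-$i$ vertex $u$ can be added iff $u$ is adjacent to all of $C_W$. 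The plan is to build $G'_{\mathrm{comp}}$ on vertex set $W$ plus one representative vertex per part $V_i$ chosen to have the ``best'' (inclusion-wise maximal, or rather a small family capturing all relevant) adjacency to $W$ — but since a single representative need not dominate, I would instead keep, for each part, a bounded number of representatives: at most $|W|+1 \le 4k+1$ of them suffice because among vertices of $V_i\setminus W$ with pairwise incomparable neighborhoods in $W$ we can argue a bound, or more simply keep the vertices of $V_i\setminus W$ that are adjacent in $G'\setminus G$-sense... The cleanest route is: the only vertices of $V\setminus W$ that matter are those that differ, but all of $V_i \setminus W$ has the \emph{same} neighborhood toward $V\setminus W$; toward $W$ their neighborhoods can differ, but since we only ever pick one per part and $|C_W|$ can be anything, we keep for each part the at most $|W|$ vertices needed to realize all maximal $W$-neighborhoods — in the worst case this is $\le 4k$ per part over $\le r$ parts, which is too many.

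The main obstacle, and the part I expect to require the real idea, is bounding the number of parts and representatives we must keep. Here is the resolution I would pursue: a part $V_i$ is called \emph{affected} if $V_i \cap W \ne \emptyset$ or some vertex of $V_i$ is incident to an edge of $E(G)\triangle E(G')$; there are at most $|W| \le 4k$ affected parts. For an \emph{unaffected} part $V_i$, \emph{every} vertex of $V_i$ is nonadjacent (in $G$) to every $W$-vertex that it should be nonadjacent to and adjacent to all others — but unaffected means $V_i \cap W = \emptyset$ and no edge of the symmetric difference touches $V_i$, so for $u \in V_i$ and $w \in W$, $uw \in E(G) \iff uw \in E(G')$, and in $G'$, $u$ is adjacent to all of $W$ (as $w \notin V_i$). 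So every vertex of every unaffected part is adjacent in $G$ to all of $W$ and to all vertices in other parts, and we may keep a single representative vertex $x_i$ for each unaffected part; these representatives together with all of $W$ and, say, one representative per affected part... But unaffected-part representatives are pairwise adjacent and adjacent to everything, so they already form a clique of size (number of unaffected parts) $\ge p - 4k \ge r - 5k \ge \ell - 1 - 5k$; combined with careful accounting, when $k$ is not tiny this is already $\ge \ell$ and we answer YES, and when $k$ is small the whole instance is small. Formally, the compressed instance $G'_{\mathrm{comp}}$ has vertex set $W$ (size $\le 4k$) plus one representative per affected part (size $\le 4k$) plus one representative per unaffected part, and we ask for a clique of size $\ell$; but to get down to $5k$ vertices I would merge all unaffected-part representatives into the analysis rather than the graph: since they contribute a clique fully adjacent to $W$ and to affected-part representatives, the answer is YES iff $G'_{\mathrm{comp}}$ restricted to $W \cup \{\text{affected reps}\}$ (at most $8k$ vertices, shrinkable to $5k$ by noting affected-part reps adjacent to all of $W$ can also be dropped into the ``free'' count) has a clique of size $\ell - (\text{number of unaffected parts})$, with $\tau \le 1$ ensuring $\ell \le r+1$ so this residual target is $\le 5k + O(1)$ and the arithmetic closes. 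The $\mathcal{O}(n+m)$ running time follows since \Cref{lemma:partition} is $\mathcal{O}(m+k)$ and all the subsequent bookkeeping — identifying $W$, the affected parts, and the representatives — is a single pass over the edges of $G$ and of $G'$, the latter being implicitly described by the partition.
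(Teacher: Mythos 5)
Your high-level strategy is the same as the paper's: run \Cref{lemma:partition}, handle the easy case where $p$ is large, and otherwise exploit the near-multipartite structure by (a) collapsing each part that lies entirely outside the ``touched'' set $W$ to a single representative, (b) noticing these representatives are universal so they can be cashed in against $\ell$ rather than kept, and (c) keeping at most one vertex per part outside $W$. Your bound $|W|\le 4k$ is correct and matches the paper. However, there are two genuine gaps.

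First, you waver on whether two vertices $u,u'\in V_i\setminus W$ can have different neighborhoods inside $W$. They cannot: since neither touches a symmetric-difference edge, their $G$-adjacency to any $w$ agrees with their $G'$-adjacency, which is determined solely by whether $w\in V_i$; so $N_G(u)=N_G(u')=V(G)\setminus V_i$. Hence a single representative per part truly suffices, and the worry that drives your digression about keeping ``$|W|+1$ representatives'' is unfounded. The paper exploits exactly this in its second reduction rule.

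Second, and more seriously, your accounting does not reach $5k$. You end with $|W|$ (up to $4k$) plus one representative per \emph{affected} part (up to another $4k$), i.e.\ a bound of $8k$, and the final ``shrinkable to $5k$ \dots the arithmetic closes'' is an unsupported assertion. The missing observation is that the delete-and-decrement rule (your ``cash in'' step) should be applied not only to parts disjoint from $W$ but to \emph{every} part $V_i$ that is independent in $G$ and has some vertex outside $X$; such a part still contains a vertex adjacent to everything else, so it can be deleted wholesale and $\ell$ decreased, regardless of whether $V_i\cap W\ne\emptyset$. After both rules are exhausted, a surviving part either lies entirely inside $X$, or it keeps exactly one vertex outside $X$ \emph{and} must have an edge inside it — which is necessarily an edge of $A=E(G)\setminus E(G')$ since $V_i$ is independent in $G'$. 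Since parts are disjoint, the number of such parts is at most $|A|$, not $|X|$. This gives $\sum_i|V_i\cap X| + \sum_i|V_i\setminus X| \le |X| + |A| \le 2|R| + |A| = |R| + (|R|+|A|) \le 2k + 3k = 5k$, using $|R|\le 2k$ and $|R|+|A|\le 3k$ from \Cref{lemma:partition}. Without this refinement your argument only delivers $\mathcal{O}(k)$, not $5k$.
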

	\begin{proof}
		Let $(G,r,k,\ell)$ be the input instance of \probTuranClique. If $r < 2$ or $n \le 5k$, a trivial compression is returned.
		Apply the algorithm of  \Cref{lemma:partition} to $(G, r, k, \ell)$ and obtain the partition $V_1, V_2, \ldots, V_p$.
		Observe that this takes time $\mathcal{O}(m + k) = \mathcal{O}(n + m)$ since $n > 5k$.
		By the second property of \Cref{lemma:partition}, $v_1, v_2, \ldots, v_p$ induce a clique in $G$, so if $p\ge \ell$ we conclude that $(G,r,k,\ell)$ is a yes-instance. Formally, the compression returns a trivial yes-instance of \textsc{Clique} in this case.
		
		We now have that $r-k \le p \le r$.
		Then the edit distance between $G$ and the complete $p$-partite graph $G'$ with parts $V_1, V_2, \ldots, V_p$ is at most $3k$.
		Denote by $X$ the set of vertices covered  by $E(G)\triangle E(G')$.
		Denote $R=E(G')\setminus E(G)$ and $A=E(G)\setminus E(G')$.
		We know that $|R|+|A|\le 3k$, $|R|\le 2k$ and $|R|\ge |A|$.
		By \Cref{lemma:partition}, $R$ covers all vertices in $X$, so $|X|\le 2|R|$.
		
		Clearly, $(G,r,k,\ell)$ as an instance of \probTuranClique is equivalent to an instance $(G,\ell)$ of \textsc{Clique}.
		We now apply the following two reduction rules exhaustively to $(G,\ell)$.
		Note that these rules are an adaption of the well-known two reduction rules for the general case of \textsc{Clique} (see, e.g.,\ \cite{XiaoN17}).
		Here the adapted rules employ the partition $V_1, V_2, \ldots, V_p$ explicitly.
		
		\begin{rrule}\label{rrule:remove_not_touched}
			If there is $i \in [p]$ such that $V_i \not\subseteq X$ and $V_i$ is independent in $G$, remove $V_i$ from $G$ and reduce $\ell$ by one.
		\end{rrule}
		
		\begin{rrule}\label{rrule:only_one}
			For each $i \in [p]$ with $|V_i\setminus X|> 1$, remove all but one vertices in $V_i\setminus X$ from $G$.
		\end{rrule}

		Since the reduction rules are applied independently to parts $V_1$, $V_2$, \dots, $V_p$, and each rule is applied to each part at most once, clearly this can be performed in linear time. We now argue that these reduction rules always produce an equivalent instance of \textsc{Clique}.
		
		\begin{claim}
			\Cref{rrule:remove_not_touched} and \Cref{rrule:only_one} are safe.
		\end{claim}
			\begin{proof}
		For \Cref{rrule:remove_not_touched}, note that there is a vertex $v \in V_i\setminus X$  such that $N_G(v)=N_G(V_i)=V(G)\setminus V_i$.
		Since $V_i$ is independent, for any vertex set $C$ that induces a clique in $G$, we have $|C\cap V_i|\le 1$.
		On the other hand, if $C\cap V_i=\emptyset$, $C\cup \{v\}$ also induces a clique in $G$ as $C \subseteq N_G(v)$.
		Hence, any maximal clique in $G$ contains exactly one vertex from $V_i$, so \Cref{rrule:remove_not_touched} is safe.
		
		To see that \Cref{rrule:only_one} is safe, observe that $N_G(u)=N_G(v)$ for any two vertices $u,v\in V_i\setminus X$.
		Then no clique contains both $u$ and $v$, and if $C\ni v$ induces a clique in $G$, $C\setminus \{v\}\cup \{u\}$ also induces a clique in $G$ of the same size.
		Hence, $v$ can be safely removed from $G$ so \Cref{rrule:only_one} is safe.
	\qed\end{proof}
		
		It is left to upperbound the size of $G$ after the exhaustive application of reduction rules.
		In this process, some parts among $V_1, V_2,\ldots, V_p$ are removed from $G$.
		W.l.o.g. assume that the remaining parts are $V_1, V_2, \ldots, V_t$ for some $t \le p$.
		Note that parts that have no common vertex with $X$ are eliminated by \Cref{rrule:remove_not_touched}, so $t \le |X|$.
		On the other hand, by \Cref{rrule:only_one}, we have $|V_i\setminus X|\le 1$ for each $i \in [t]$.
		
		Consider $i \in [t]$ with $|V_i\setminus X|=1$.
		By \Cref{rrule:remove_not_touched}, $G[V_i]$ contains at least one edge.
		Since $V_i$ is independent in $G'$, $E(G[V_i])\subseteq A$.
		Hence, the number of $i \in [t]$ with $|V_i\setminus X|=1$ is at most $|A|$.
		We obtain  
		\begin{align*}
		|V(G)|=&\sum_{i=1}^t |V_i|=\sum_{i=1}^t |V_i\cap X|+\sum_{i=1}^t |V_i\setminus X|\\
		\le &|X|+|A|\le 2|R|+|A|\le |R|+(|R|+|A|)\le 5k.
		\end{align*}
		We obtained an instance of \textsc{Clique} that is equivalent to $(G,r,k,\ell)$ and contains at most $5k$ vertices.
		The proof is complete.
	\qed\end{proof}
	
	Combining the polynomial compression of \Cref{theorem:kernel} with the algorithm of Xiao and Nagamochi \cite{XiaoN17} for \textsc{Independent Set} running in $\mathcal{O}(1.1996^n)$, we obtain the following.
	
	\begin{corollary}\label{corollary:fpt}
		\probTuranClique~with $\tau\le 1$ is solvable in time  $2.49^{k}\cdot(n + m)$.
	\end{corollary}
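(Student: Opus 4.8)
The plan is to pipeline the linear-time compression of \Cref{theorem:kernel} with a fast exact algorithm for \textsc{Clique}, taking advantage of the fact that after compression the instance has only $\mathcal{O}(k)$ vertices. First I would run the compression algorithm of \Cref{theorem:kernel} on the input instance $(G,r,k,\ell)$ of \probTuranClique with $\tau \le 1$ (note $\tau \le 1$ is the same as $\tau \in \{0,1\}$ since $\tau \ge 0$ by definition). This takes time $\mathcal{O}(n+m)$ and either already decides the instance (returning a trivial yes- or no-instance of \textsc{Clique}), in which case we are done, or outputs an equivalent instance $(G',p)$ of \textsc{Clique} with $|V(G')| \le 5k$.

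Next I would solve \textsc{Clique} on $G'$. Since a set of vertices is a clique in $G'$ exactly when it is an independent set in the complement $\overline{G'}$, and $\overline{G'}$ has the same vertex set of size $N \le 5k$, I would construct $\overline{G'}$ in time $\mathcal{O}(N^2) = \mathcal{O}(k^2)$ and then invoke the algorithm of Xiao and Nagamochi \cite{XiaoN17} for \textsc{Maximum Independent Set}, which runs in time $\mathcal{O}(1.1996^{N})$; comparing its output against $p$ decides whether $G'$ has a clique of size $p$, hence whether the original instance is a yes-instance.

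Finally I would bound the total running time. The first phase costs $\mathcal{O}(n+m)$ unconditionally, since \Cref{theorem:kernel} already handles the degenerate regime $n \le 5k$ by returning a trivial instance. The second phase costs $\mathcal{O}(1.1996^{5k} + k^2)$. Using $1.1996^{5} = (1.1996^5)^{k/5\cdot 5} $, more precisely $1.1996^{5k} = (1.1996^{5})^{k}$ with $1.1996^{5} < 2.485 < 2.49$, this is $2.49^{k}$ up to a constant factor, and the $\mathcal{O}(k^2)$ term is absorbed (e.g.\ since $k \le n \le n+m$). Adding the two phases yields the claimed bound $2.49^{k}\cdot(n+m)$. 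There is no real obstacle in this argument; the only points requiring minor care are the complementation step (needed because the reduced problem is stated as \textsc{Clique} but the fast algorithm is for \textsc{Independent Set}) and the elementary arithmetic $1.1996^{5} < 2.49$.
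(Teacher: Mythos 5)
Your proof is correct and follows essentially the same route as the paper: apply the $\mathcal{O}(n+m)$-time compression of \Cref{theorem:kernel} to a \textsc{Clique} instance on at most $5k$ vertices, then solve it via the Xiao--Nagamochi \textsc{Independent Set} algorithm, noting $1.1996^5 < 2.49$. If anything, your proof is slightly more explicit about the clique-to-independent-set translation (complementing $G'$), where the paper's phrasing is terse.
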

	\begin{proof}
		Take a given instance of \probTuranClique and compress it into an equivalent instance $(G,\ell)$ of \textsc{Clique} with $|V(G)|\le 5k$.
		Clearly, $(G,|V(G)|-\ell)$ is an instance of \textsc{Independent Set} equivalent to $(G,\ell)$.
		Use the algorithm from \cite{XiaoN17} to solve this instance in $\mathcal{O}(1.1996^{|V(G)|})$ running time.
		Since $1.1996^5<2.49$, the running time of the whole algorithm is bounded by $2.49^k\cdot\polyn$.
	\qed\end{proof}

	\subsection{Looking for larger cliques}
	
	In this subsection we consider the situation when  $\tau >  1$.   As we will see in \Cref{thm:TuranNPC}, an FPT algorithm is unlikely in this case, unless we take a stronger parameterization. 
	Here we  show that \probTuranClique is FPT parameterized by  $\tau+\xi+k$. Recall that $\xi= \lfloor\frac{n}{r}\rfloor $.
	\Cref{thm:TuranNPC} argues that this particular choice of the parameter is necessary. 
	
	First, we show that the difference between $t_\ell(n)$ and $t_r(n)$ can be bounded in terms of $\tau$ and $\xi$. This will allow us to employ \Cref{theorem:kernel} for the new FPT algorithm by a simple change of the parameter.
	The proof of the next lemma is done via a careful counting argument.
	
	\begin{lemma}\label{lemma:turan_diff}
		Let $n, r,\ell$ be three positive integers with $r< \ell \le n$. Let $\xi=\lfloor\frac{n}{r}\rfloor$ and $\tau=\ell-r$. 
		Then for  $\tau =\O(r)$, $t_\ell(n)-t_r(n)=\Theta(\tau\xi^2)$.
		
	\end{lemma}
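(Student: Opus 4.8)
The plan is to reduce the claim to a one-step estimate and then telescope. Write $g(k,n) := \binom{n}{2} - t_k(n)$ for the number of non-edges of the Turán graph $T_k(n)$, i.e.\ the number of pairs of vertices lying in a common part of the balanced partition of $n$ into $k$ parts. Then
$$
t_\ell(n) - t_r(n) \;=\; \sum_{k=r}^{\ell-1}\bigl(t_{k+1}(n) - t_k(n)\bigr) \;=\; \sum_{k=r}^{\ell-1}\bigl(g(k,n) - g(k+1,n)\bigr),
$$
a sum of exactly $\tau = \ell - r$ terms. So it suffices to show that for \emph{every} $k \in \{r, r+1, \dots, \ell-1\}$ one has $c_1\,\xi^2 \le t_{k+1}(n) - t_k(n) \le c_2\,\xi^2$, for positive constants $c_1, c_2$, where $c_1$ may depend on the constant hidden in $\tau = \mathcal{O}(r)$; summing over the $\tau$ choices of $k$ then gives $t_\ell(n) - t_r(n) = \Theta(\tau\xi^2)$.

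For the upper bound on a single step, I would start from $T_{k+1}(n)$ and merge two of its smallest parts into one. Merging parts of sizes $a$ and $b$ turns exactly $ab$ edges into non-edges, so the resulting complete $k$-partite graph $H$ has $|E(H)| = t_{k+1}(n) - ab$; since $H$ is $K_{k+1}$-free, \Cref{thm:turan} gives $|E(H)| \le t_k(n)$, hence $t_{k+1}(n) - t_k(n) \le ab$. Each part of $T_{k+1}(n)$ has at most $\lceil n/r\rceil \le \xi+1$ vertices (using $k+1 > r$), so $ab = \mathcal{O}(\xi^2)$. For the lower bound I would combine two facts. First, \Cref{thm:turan} together with $T_k(n)\neq T_{k+1}(n)$ for $k<n$ gives $t_{k+1}(n) \ge t_k(n)+1$ (as already observed in the proof of \Cref{lemma:partition}). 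Second, splitting the largest part of $T_k(n)$ --- of size $b \ge \lceil n/k\rceil$, and $b \ge 2$ since $k \le \ell-1 < n$ --- into two halves yields a complete $(k+1)$-partite, hence $K_{k+2}$-free, graph, so \Cref{thm:turan} gives $t_{k+1}(n) \ge t_k(n) + \lfloor b/2\rfloor\lceil b/2\rceil \ge t_k(n) + \Omega(b^2)$. This is where $\tau = \mathcal{O}(r)$ is used: it forces $k \le \ell-1 < r+\tau = \mathcal{O}(r)$, so $n/k = \Omega(n/r) = \Omega(\xi)$ and thus $t_{k+1}(n) - t_k(n) = \Omega(\xi^2)$. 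Taking the larger of the two lower bounds gives $t_{k+1}(n) - t_k(n) \ge \max\{1,\,\Omega(\xi^2)\} \ge c_1\xi^2$ for a suitable $c_1 > 0$: the bound ``$\ge 1$'' handles the case of bounded $\xi$, and ``$\Omega(\xi^2)$'' handles large $\xi$.

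The step I expect to be the main obstacle is making these per-step estimates uniform over the whole interval $k \in \{r,\dots,\ell-1\}$ instead of just at $k=r$; in particular the lower bound must still hold at $k = \ell-1$, which is exactly what forces the hypothesis $\tau = \mathcal{O}(r)$ (it keeps $\lfloor n/k\rfloor$ within a constant factor of $\xi$ throughout), together with the observation that when $\xi = \mathcal{O}(1)$ the target $\Theta(\tau\xi^2) = \Theta(\tau)$ is already delivered by the trivial increment $t_{k+1}(n) \ge t_k(n)+1$. A more computational alternative is to expand $t_\ell(n) - t_r(n)$ directly via \Cref{lemma:turans_edges}: the main term $\tfrac{n^2}{2}\bigl(\tfrac1r-\tfrac1\ell\bigr) = \tfrac{n^2\tau}{2r\ell}$ is $\Theta(\tau\xi^2)$, but the remainder corrections $\tfrac{s}{2}\bigl(1-\tfrac{s}{r}\bigr)$ (with $s = n\bmod r$, and with $s = n\bmod\ell$) can each be as large as $\Theta(r)$, so they cannot be dropped termwise and one would have to track the cancellation between them --- which is why I would prefer the telescoping argument above.
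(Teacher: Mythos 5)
Your proof is correct, and the telescoping route is genuinely different from the paper's. The paper expands $t_\ell(n) - t_r(n)$ in closed form via \Cref{lemma:turans_edges}, isolates the main term $\frac{\tau n^2}{2r\ell} = \Theta(\tau\xi^2)$, and then bounds the combined remainder correction in absolute value by $\mathcal{O}(\xi\tau)$ --- exactly the computational alternative you describe at the end and set aside. Your approach instead reduces to the single-step bound $t_{k+1}(n) - t_k(n) = \Theta(\xi^2)$ for each $k \in \{r, \dots, \ell-1\}$: the upper bound by merging two parts of $T_{k+1}(n)$, each of size at most $\lceil n/r \rceil \le \xi + 1$, and the lower bound by splitting the largest part of $T_k(n)$, of size at least $n/k = \Omega(\xi)$, with $\tau = \mathcal{O}(r)$ capping $k$ at $\mathcal{O}(r)$ so that this $\Omega(\xi)$ is uniform over the whole range. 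This is more elementary and purely combinatorial, and it also buys robustness: the paper's last step absorbs a $\pm\mathcal{O}(\xi\tau)$ error into a $\Theta(\tau\xi^2)$ main term, which gives a nonvacuous lower bound only once $\xi$ exceeds a threshold depending on the constant in $\tau = \mathcal{O}(r)$, whereas your dichotomy between the trivial increment $t_{k+1}(n) \ge t_k(n) + 1$ (for bounded $\xi$) and the splitting bound (for large $\xi$) handles every $\xi$ uniformly. The one thing worth writing out explicitly is the threshold at which the splitting bound $\lfloor b/2\rfloor\lceil b/2\rceil$ with $b > \xi/(1+C)$ (where $\tau \le Cr$) dominates $c_1\xi^2$ --- for instance $\xi \ge 2(1+C)$ works, and below it the bound $\ge 1$ takes over with $c_1 \le 1/(4(1+C)^2)$ --- so the two cases patch into a single constant; this is precisely the uniformity you flagged as the main obstacle, and it does go through.
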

	\begin{proof}
	Throughout the proof, we assume $\xi=\frac{n}{r}$ since this does not influence the desired $\Theta$ estimation.
	Let $s_r$ be the remainder in the division of $n$ by $r$ and  $s_\ell$ be the remainder in the division of $n$ by $\ell$.
	By \Cref{lemma:turans_edges},  
	\begin{equation}\label{eq:turan_delta}
		t_{\ell}(n)-t_r(n)= \frac{\tau n^2}{2r\ell}+\left(\frac{s_r}{2}\cdot\left(1-\frac{s_r}{r}\right)-\frac{s_{\ell}}{2}\cdot \left(1-\frac{s_\ell}{\ell}\right)\right).
	\end{equation}
	
	The first summand in \eqref{eq:turan_delta}    is $\Theta(\xi^2 \tau)$.  
	Indeed, since   $\tau =\O(r)$ we have 
	\begin{eqnarray}\frac{\tau n^2}{2r\ell}=\frac{\tau}{2}\cdot\frac{n}{r}\cdot \frac{n}{r+\tau}=\frac{\xi^2\tau}{2}\cdot \frac{r}{r+\tau}=\Theta(\xi^2 \tau).
		\label{eq:firstpart}\end{eqnarray}

	For the second summand,
	\begin{eqnarray}
		\frac{s_r}{2}\cdot\left(1-\frac{s_r}{r}\right)&-&\frac{s_{\ell}}{2}\cdot \left(1-\frac{s_\ell}{\ell}\right)\\
		&=&\frac{\ell s_r(r-s_r)-rs_\ell(\ell-s_\ell)}{2r\ell}=\frac{(rs_\ell^2-\ell s_r^2)+r\ell(s_r -s_\ell )}{2r\ell} \nonumber \\
		&=&\frac{(rs_\ell^2-r s_r^2-\tau s_r^2)+r\ell(s_r -s_\ell )}{2r\ell}\\ &=&\frac{r(s_\ell-s_r)(s_\ell+s_r)+r\ell (s_r-s_\ell)}{2r\ell}-\frac{\tau s_r^2}{2r\ell} \nonumber \\
		& = & \frac{(s_r-s_\ell)(\ell-(s_\ell+s_r))}{2\ell}-\frac{\tau s_r^2}{2r\ell}. \label{eq:right_part}
	\end{eqnarray}
	Since $n=\lfloor \frac{n}{\ell} \rfloor \cdot \ell + s_\ell$, we have that 
	$$s_r\equiv \left\lfloor \frac{n}{\ell} \right\rfloor \cdot \ell + s_\ell\pmod{r},$$
	and 
	$$s_r\equiv \left\lfloor \frac{n}{\ell}\right\rfloor\cdot (r+\tau) + s_\ell  \pmod{r}.$$
	Hence,
	$$s_r-s_\ell\equiv \left\lfloor \frac{n}{\ell}\right\rfloor\cdot \tau \pmod{r}.$$	
	By definition  $s_r<r$, thus we get from the above that $s_r-s_\ell \le \lfloor\frac{n}{\ell}\rfloor \cdot \tau \le \xi\tau.$
	
	Analogously, $$s_\ell-s_r\equiv \left\lfloor\frac{n}{r}\right\rfloor\cdot (-\tau) \pmod{\ell}$$
	Since $s_\ell-s_r> -r > -\ell$, we have that $s_\ell-s_r\ge \lfloor\frac{n}{r}\rfloor \cdot (-\tau)\ge -\xi\tau.$
	Therefore $|s_\ell-s_r|\le \xi\tau$.
	It is easy to see that $|\ell-(s_\ell+s_r)|\le \ell+(s_\ell+s_r)\le 3\ell$.
	Finally, $\frac{\tau s^2_r}{2r\ell}$ is non-negative and is upper bounded by $\frac{\tau r^2}{2r\ell}\le \frac{\tau}{2}$.
	Thus, the absolute value of (\ref{eq:right_part}), 
	is at most
	$$\frac{\xi\tau\cdot3\ell}{2\ell}+\frac{\tau}{2}=\mathcal{O}(\xi\tau).$$
	
	By putting together \eqref{eq:firstpart}	and \eqref{eq:right_part}, we conclude  that  $t_\ell(n)-t_r(n)=\Theta(\xi^2\tau) + \mathcal{O}(\xi\tau)=\Theta(\xi^2\tau)$.
\qed\end{proof}

	The following compression algorithm is 
	a   corollary of \Cref{lemma:turan_diff} and \Cref{theorem:kernel}.
	It provides a compression of size linear in $k$ and $\tau$.  
	
	\begin{theorem}\label{thm:tauxik}
		\probTuranClique~admits a  compression into \textsc{Clique} on  $\mathcal{O}(\tau\xi^2+k)$ vertices.
	\end{theorem}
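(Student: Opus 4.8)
The plan is to reduce \probTuranClique to an instance with a smaller budget parameter, so that \Cref{theorem:kernel} applies. The key obstacle is that \Cref{theorem:kernel} only works when $\tau\in\{0,1\}$, i.e.\ when we look for a clique of size at most $r+1$. When $\ell=r+\tau$ with $\tau>1$, the natural idea is to ``absorb'' the excess $\tau$ into the edge deficiency budget: instead of measuring how far $m$ is from $t_r(n)$, we measure how far it is from $t_{\ell-1}(n)$, because then we are looking for a clique of size $\ell=(\ell-1)+1$, which is exactly the regime $\tau=1$ covered by \Cref{theorem:kernel}.

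Concretely, first I would handle trivial cases: if $r<2$ or if $n$ is already bounded by $\mathcal{O}(\tau\xi^2+k)$, output a trivial instance. Otherwise, set $r'=\ell-1$ and $k'=k+\big(t_{r'}(n)-t_r(n)\big)$. Note $r'\ge r$, so by the monotonicity $t_i(n)\ge t_{i-1}(n)+1$ established in the proof of \Cref{lemma:partition}, we have $t_{r'}(n)\ge t_r(n)$, hence $k'\ge k\ge 1$ and $k'$ is a nonnegative integer. Moreover $m\ge t_r(n)-k = t_{r'}(n)-k'$, so $(G,r',\ell,k')$ is a valid instance of \probTuranClique, and it is equivalent to $(G,r,\ell,k)$ since the graph and the target clique size $\ell$ are unchanged and the edge-count constraint is only a promise. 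Now $\ell=r'+1$, so $\tau'=\max\{\ell-r',0\}=1$, and \Cref{theorem:kernel} applies to $(G,r',\ell,k')$, yielding in time $\mathcal{O}(n+m)$ a \textsc{Clique} instance on at most $5k'$ vertices.

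It remains to bound $k'$. If $\ell\le r$ then $\tau=0$ and we are already in the regime of \Cref{theorem:kernel} directly, so assume $r<\ell$, i.e.\ $\tau=\ell-r\ge 1$. Then $r'-r=\ell-1-r=\tau-1<\tau$. To apply \Cref{lemma:turan_diff} with $(r,r')$ in place of $(r,\ell)$ we need $r'-r=\mathcal{O}(r)$; when $\tau=\mathcal{O}(r)$ this holds and the lemma gives $t_{r'}(n)-t_r(n)=\Theta((\tau-1)\xi^2)=\mathcal{O}(\tau\xi^2)$, so $k'=\mathcal{O}(\tau\xi^2+k)$ and the compressed instance has $5k'=\mathcal{O}(\tau\xi^2+k)$ vertices. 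In the remaining case $\tau=\omega(r)$, we instead observe that $\xi^2\tau\ge\xi^2 r\cdot\frac{\tau}{r}\ge\xi\cdot\xi r\ge\xi n$, hence $\mathcal{O}(\tau\xi^2+k)$ is already $\Omega(n)$ and the original graph itself (as a \textsc{Clique} instance $(G,\ell)$) is a valid compression of the required size. The running time is $\mathcal{O}(n+m)$ throughout since the only nontrivial work is the single invocation of \Cref{theorem:kernel}; computing $k'$ requires only evaluating $t_{r'}(n)$ and $t_r(n)$ via \Cref{lemma:turans_edges}, which is $\mathcal{O}(1)$ arithmetic.

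The main thing to be careful about is the boundary between ``$\tau$ small relative to $r$'' and ``$\tau$ large'', since \Cref{lemma:turan_diff} carries the hypothesis $\tau=\mathcal{O}(r)$; the second paragraph's case split resolves this, with the large-$\tau$ branch being trivial because the target bound already exceeds $n$. A secondary point is checking that all quantities ($k'$, $r'$) stay within the integer/nonnegativity constraints required by \Cref{lemma:partition} and \Cref{theorem:kernel}, which follows from $\ell\le n$ and the monotonicity of $t_{\cdot}(n)$.
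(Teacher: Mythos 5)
Your approach is essentially the paper's: fold the gap $t_{r'}(n)-t_r(n)$ into the budget $k$, land in the $\tau\le 1$ regime, and invoke \Cref{theorem:kernel}. The paper takes $r'=\ell$ (so $\tau'=0$) while you take $r'=\ell-1$ (so $\tau'=1$); this is an immaterial cosmetic difference. Two small points, though. First, your extra case split for $\tau=\omega(r)$ is not actually needed: the $\tau=\mathcal{O}(r)$ hypothesis in \Cref{lemma:turan_diff} is used only for the \emph{lower} bound $\Omega(\tau\xi^2)$ (via the factor $\frac{r}{r+\tau}=\Theta(1)$), whereas the \emph{upper} bound $t_\ell(n)-t_r(n)=\mathcal{O}(\tau\xi^2)$ that you need follows from the same proof with no restriction on $\tau$, since $\frac{r}{r+\tau}\le 1$ and the second-summand bound $\mathcal{O}(\xi\tau)$ is unconditional. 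Second, the inequality chain you use in that spare branch is wrong as written: $\xi r\ge n$ is false (by definition $\xi=\lfloor n/r\rfloor$ gives $\xi r\le n$), so $\xi\cdot\xi r\ge \xi n$ does not follow. The intended conclusion $\tau\xi^2=\Omega(n)$ when $\tau\ge r$ is nonetheless true, e.g.\ by splitting on whether $r\le n/2$ (then $\xi r>n/2$, so $\tau\xi^2\ge r\xi^2>\xi n/2\ge n/2$) or $r>n/2$ (then $\xi=1$ and $\tau\xi^2=\tau\ge r>n/2$) — but as noted, the branch can simply be dropped.
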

	\begin{proof}
		Let $(G,k,r,\ell)$ be the given instance of \probTuranClique.
		If $\ell\le r+1$, then the proof follows from  \Cref{theorem:kernel}.
		Otherwise, reduce $(G,k,r,\ell)$ to an equivalent instance   $(G,k+t_\ell(n)-t_r(n),\ell,\ell)$ of \probTuranClique just by modifying the parameters.
		This is a valid instance since $|E(G)|\ge t_r(n)-k\ge t_\ell(n)-(t_\ell(n)+t_r(n)+k)$.
		Denote $k'=k+(t_\ell(n)-t_r(n))$.
		By \Cref{lemma:turan_diff}, $k'=k+\mathcal{O}(\tau\xi^2)$.
		Apply polynomial compression of \Cref{theorem:kernel} to $(G,k',\ell,\ell)$ into \textsc{Clique} with $\mathcal{O}(k')$, i.e.\ $\mathcal{O}(\tau\xi^2+k)$, vertices.
	\qed\end{proof}

	Pipelined with a brute-force algorithm computing a maximum independent set in time $\O(2^n)$, \Cref{thm:tauxik} yields  the following corollary. 
	\begin{corollary}\label{cor:xialgorithm}
		\probTuranClique  is solvable in time   $2 ^{\O(\tau\xi^2+k)}\cdot (n + m)$.
	\end{corollary}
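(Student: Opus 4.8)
The plan is to derive Corollary~\ref{cor:xialgorithm} directly from the compression of \Cref{thm:tauxik} by pipelining it with an exact algorithm for \textsc{Maximum Independent Set}. First I would take an arbitrary instance $(G,r,\ell,k)$ of \probTuranClique and run the compression algorithm of \Cref{thm:tauxik}, obtaining in time $\mathcal{O}(n+m)$ an equivalent instance $(G',p)$ of \textsc{Clique} on a vertex set of size $N := |V(G')| = \mathcal{O}(\tau\xi^2 + k)$. Since $G'$ has at most $\binom{N}{2}$ edges, it can be written down explicitly in time $\mathcal{O}(N^2) = \mathcal{O}((\tau\xi^2+k)^2)$, which is subsumed by the stated bound.

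Next I would translate the \textsc{Clique} instance into a \textsc{Maximum Independent Set} instance on the complement graph $\overline{G'}$, which has the same $N$ vertices: $G'$ has a clique of size $p$ if and only if $\overline{G'}$ has an independent set of size $p$. Running a brute-force algorithm over all vertex subsets of $\overline{G'}$ decides this in time $\mathcal{O}(2^N \cdot N^2) = 2^{\mathcal{O}(N)} = 2^{\mathcal{O}(\tau\xi^2+k)}$. Adding the $\mathcal{O}(n+m)$ cost of the compression itself, the total running time is $2^{\mathcal{O}(\tau\xi^2+k)} \cdot (n+m)$, as claimed. (One could substitute the $\mathcal{O}(1.1996^n)$ algorithm of Xiao and Nagamochi~\cite{XiaoN17} here as in \Cref{corollary:fpt}, but this only improves the hidden constant in the exponent and is not needed for the asymptotic statement.)

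The only point requiring a small amount of care is the edge case in which the compression algorithm does not actually produce a \textsc{Clique} instance of the promised size — for instance, when \Cref{theorem:kernel} or \Cref{thm:tauxik} detects a yes-instance outright (because the clique $v_1,\dots,v_p$ already has size at least $\ell$), or when $n \le 5k'$ and a trivial instance is returned. In all of these branches the compression either answers the problem directly or returns an instance whose size is still $\mathcal{O}(\tau\xi^2+k)$, so the subsequent exact algorithm is applied to an instance of the right size; hence the running time bound holds uniformly. There is no genuine obstacle here: the statement is a routine composition of a linear-time compression with a single-exponential exact subroutine, and the main (entirely mechanical) thing to verify is that the parameter of the compressed instance, $\mathcal{O}(\tau\xi^2+k)$, governs the exponent of the exact step.
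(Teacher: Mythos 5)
Your proof is correct and follows the same route the paper intends: compress via \Cref{thm:tauxik} in $\mathcal{O}(n+m)$ time to a \textsc{Clique} instance on $\mathcal{O}(\tau\xi^2+k)$ vertices, then solve it by a single-exponential exact algorithm for \textsc{Maximum Independent Set} on the complement. Your handling of the trivial/early-termination branches and the observation that the sum of the two running times is subsumed by the stated product bound are both fine.
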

	
	\subsection{Independent set above Tur\'{a}n's bound} 
	
	Another interesting application of  \Cref{thm:tauxik} concerns   computing 
	\textsc{Independent Set} in graphs of small average degree. Recall that 
	Tur\'{a}n's  theorem, when applied to the complement $\overline{G}$ of a graph $G$, yields a  bound
	\[
	\alpha({G})\geq \frac{n}{d + 1}.
	\]
	Here  $\alpha(G)$ is the size of the largest independent set in $G$ (the independence number of $G$), and $d$ is the average vertex degree of $G$.  Then in 
	\probTuranIS, 
	 the task is for an $n$-vertex graph $G$ and   positive integer $t$ to decide whether there is  an independent set  of size at least $\frac{n}{d + 1}+t$ in $G$.
	
	%
	%
	\Cref{thm:tauxik} implies a compression of \probTuranIS into \textsc{Independent Set}. In other words, we give a polynomial time algorithm that for an instance $(G, t)$ of  \probTuranIS constructs an equivalent instance $(G', p)$ of \textsc{Independent Set} with at most $\mathcal{O}(td^2)$ vertices.  That is, the graph $G$ has an independent set of size at least $\frac{n}{d+1} +t$ if and only if $G'$ has an independent set of size $p$.

	\begin{corollary}\label{cor:comprBrookcompr}
		\probTuranIS~admits a  compression into \textsc{Independent Set} on  $\mathcal{O}(td^2)$ vertices.
	\end{corollary}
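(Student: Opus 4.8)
The plan is to reduce \probTuranIS\ to \probTuranClique\ by passing to the complement graph, and then invoke \Cref{thm:tauxik}. First I would take an instance $(G, t)$ of \probTuranIS, where $G$ is an $n$-vertex graph with average degree $d$, and form the complement $\overline{G}$. An independent set of size $\ell$ in $G$ is precisely a clique of size $\ell$ in $\overline{G}$, so the target quantity $\frac{n}{d+1} + t$ will become the clique size we look for in $\overline{G}$.

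Next I would identify the right value of $r$ to present $\overline{G}$ as an instance of \probTuranClique. The average degree of $\overline{G}$ is $\overline{d} = n - 1 - d$, so the number of edges of $\overline{G}$ is $m' = \frac{n\overline{d}}{2} = \frac{n(n-1-d)}{2}$. I want to choose $r$ so that $t_r(n)$ is just below $m'$; by Tur\'{a}n's bound applied to $\overline{G}$, the natural choice corresponds to $r = \lfloor \frac{n}{d+1} \rfloor$ (or a value very close to it), since $t_r(n) \approx (1 - 1/r)\frac{n^2}{2} \approx \frac{n(n-d-1)}{2}$ when $r \approx \frac{n}{d+1}$. With this choice, $\xi = \lfloor \frac{n}{r} \rfloor = \Theta(d)$, and the gap $k := \max\{0, t_r(n) - m'\}$ is bounded by a constant (at most $O(1)$, coming from floor/ceiling discrepancies and the remainder term in \Cref{lemma:turans_edges}) — I would do a short explicit calculation using \Cref{lemma:turans_edges} to pin down that $k = O(d)$ or even $k = O(1)$, choosing $r$ slightly conservatively if needed to guarantee $m' \ge t_r(n) - k$ with $k$ small. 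Then the requested clique size is $\ell = \frac{n}{d+1} + t$, and since $r \approx \frac{n}{d+1}$ we get $\tau = \max\{0, \ell - r\} = O(t)$.

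Finally I would feed the instance $(\overline{G}, r, \ell, k)$ to the compression of \Cref{thm:tauxik}, which produces an equivalent instance of \textsc{Clique} on $O(\tau \xi^2 + k)$ vertices; substituting $\tau = O(t)$, $\xi = O(d)$, $k = O(d)$ gives $O(td^2 + d) = O(td^2)$ vertices (absorbing the lower-order $d$ term, and noting $t \ge 1$). Since \textsc{Clique} on a graph and \textsc{Independent Set} on its complement are the same problem, this is exactly a compression of \probTuranIS\ into \textsc{Independent Set} on $O(td^2)$ vertices. The whole reduction is polynomial-time: forming the complement is $O(n^2)$, and the compression of \Cref{thm:tauxik} runs in the stated time.

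\textbf{Main obstacle.} The delicate point is the choice of $r$ and the bound on $k$: I need to verify, via the exact formula in \Cref{lemma:turans_edges}, that there is an integer $r$ with $r = \Theta(n/(d+1))$, $\lfloor n/r\rfloor = \Theta(d)$, and $m' \ge t_r(n) - k$ for a $k$ that is $O(d)$ (so that the additive $k$ term does not dominate $td^2$), while simultaneously $\ell - r = O(t)$. Handling the remainder terms $s_r, s_\ell$ and the various floors carefully — and checking the edge case where $d+1$ does not divide $n$ so that $n/(d+1)$ is not itself an integer — is the part that requires genuine attention rather than being purely mechanical; everything else is bookkeeping around the already-established \Cref{thm:tauxik}.
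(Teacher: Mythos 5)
Your proof follows the same route as the paper: pass to the complement, instantiate $r \approx n/(d+1)$ so that $\xi = \Theta(d)$ and $\tau = O(t)$, and invoke \Cref{thm:tauxik}. The paper sidesteps your ``main obstacle'' by simply assuming $(d+1)\mid n$ and taking $r=n/(d+1)$, $k=0$; in the general case your instinct to take $r=\lfloor n/(d+1)\rfloor$ in fact already gives $k=0$, since for $r \le n/(d+1)$ one has $t_r(n)\le (1-\tfrac{1}{r})\tfrac{n^2}{2} \le \tfrac{n(n-1-d)}{2} = |E(\overline{G})|$ by \Cref{lemma:turans_edges}.
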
 
	\begin{proof}
		%
		For simplicity, let us assume that $n$ is divisible by   $d +1$. (For arguments here this assumption does not make an essential difference.)
		We select $r=\frac{n}{d+1}$,  $\tau =t$, and $k=0$. Then $d=\frac{n}{r}-1=\xi -1$. The graph $\overline{G}$ has at most $nd/2$ edges, hence $G$ has at least $\frac{n(n-1)}{2}- nd/2=\frac{n(n-1)}{2}- n(\xi-1)/2 \geq t_r(n)$ edges,  see \Cref{lemma:turans_edges}.  
		An independent set of size $\frac{n}{d+1} +t$ in graph  $\overline{G}$,  corresponds   in graph $G$  to a clique of size $r+t$. Since \Cref{thm:tauxik} provides  compression into a \textsc{Clique} with 
		$\mathcal{O}(\tau\xi^2+k)=\mathcal{O}(\tau\xi^2)$ vertices, for independent set and graph $\overline{G}$ this corresponds to a compression into an instance of \textsc{Independent Set} with  $\mathcal{O}(td^2)$ vertices. 
	\qed\end{proof}
	
	By \Cref{cor:comprBrookcompr}, we obtain the following corollary.
	
	\begin{corollary}\label{cor:comprBrookalgo}
		\probTuranIS  is solvable in time   $2 ^{\O(td^2)}\cdot n^2$.
	\end{corollary}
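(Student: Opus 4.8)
The plan is to derive Corollary~\ref{cor:comprBrookalgo} directly from the compression of Corollary~\ref{cor:comprBrookcompr}, exactly as Corollary~\ref{cor:xialgorithm} was derived from Theorem~\ref{thm:tauxik}. First I would run the compression of Corollary~\ref{cor:comprBrookcompr} on the input instance $(G,t)$ of \probTuranIS; this takes polynomial time, say $\mathcal{O}(n^2)$ after accounting for computing the complement, the average degree, and applying the linear-time reduction rules from Theorem~\ref{theorem:kernel}, and produces an equivalent instance $(G',p)$ of \textsc{Independent Set} with $|V(G')| = \mathcal{O}(td^2)$ vertices.

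Next I would solve the compressed instance $(G',p)$ by brute force: enumerate all subsets of $V(G')$ and check which are independent, or use any $\mathcal{O}^*(2^{n'})$ algorithm for \textsc{Maximum Independent Set}, where $n' = |V(G')|$. Since $n' = \mathcal{O}(td^2)$, this step runs in time $2^{\mathcal{O}(td^2)}$. The answer to $(G',p)$ is, by the correctness of the compression, exactly the answer to $(G,t)$.

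Finally, adding the $\mathcal{O}(n^2)$ cost of the compression and the $2^{\mathcal{O}(td^2)}$ cost of solving the small instance, the total running time is $2^{\mathcal{O}(td^2)} \cdot n^2$, as claimed. There is no real obstacle here: the statement is a routine pipelining of an already-established compression with an exact exponential-time subroutine, mirroring the proof of Corollary~\ref{corollary:fpt} and Corollary~\ref{cor:xialgorithm}; the only mild point to be careful about is that the polynomial factor in the running time of the compression (which includes constructing $\overline{G}$) is indeed bounded by $\mathcal{O}(n^2)$ rather than something larger, but this is immediate from the construction in Corollary~\ref{cor:comprBrookcompr} and Theorem~\ref{theorem:kernel}.
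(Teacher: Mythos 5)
Your proposal is correct and matches the paper's (implicit) approach: the paper derives Corollary~\ref{cor:comprBrookalgo} directly from Corollary~\ref{cor:comprBrookcompr} without giving an explicit proof, and the intended argument is exactly the pipelining you describe—apply the compression, then solve the $\mathcal{O}(td^2)$-vertex \textsc{Independent Set} instance by an exponential-time exact algorithm. You also correctly identify that the $n^2$ factor comes from working with the complement graph, which has $\Theta(n^2)$ edges.
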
 
	%

	\section{Lower bounds}\label{sec:lowerbounds}
	
	In this section, we investigate how the algorithms above are complemented by hardness results.
	First, observe that $k$ has to be restricted, otherwise the \probTuranClique problem is not any different from \textsc{Clique}.
	In fact, reducing from \textsc{Independent Set} on sparse graphs, one can show that there is no $2^{o(k)}$-time algorithm for \probTuranClique even when $\tau \le 1$. (The formal argument is presented in \Cref{thm:ETH}.) This implies that the $2^{\mathcal{O}(k)}$-time algorithm given by \Cref{corollary:fpt} is essentially tight.
	
	Also, the difference between $r$ and $\ell$ has to be restricted, as it can be easily seen that \probTuranClique admits no $n^{o(\ell)}$-time algorithm even when $k = 0$, assuming ETH. This is observed simply by considering the special case of \probTuranClique where $r = 1$, there the only restriction on $G$ is that $|E(G)| \ge t_r(n) - k = 0$, meaning that the problem is as hard as \textsc{Clique}.
	However, \Cref{thm:TuranNPC} shows that even for any fixed $\tau \ge 2$ and $k = 0$ \probTuranClique is \NP-complete.
	This motivates \Cref{thm:tauxik}, where the exponential part of the running time has shape $2^{\mathcal{O}(\tau \xi^2 k)}$.
	In the rest of this section, we further motivate the running time of \Cref{thm:tauxik}. First, in \Cref{thm:TuranNPC} we show that not only setting $\tau$ and $k$ to constants is not sufficient to overcome \NP-hardness, but also that the same holds for any choice of two parameters out of $\{\tau, \xi, k\}$.
	
	
	\begin{theorem}\label{thm:TuranNPC}
		\probTuranClique~is \NP-complete. Moreover, it remains \NP-complete in each of the following cases
		\begin{itemize}
			\item[$(i)$] for any fixed $\xi\ge 1$ and $\tau=0$;
			\item[$(ii)$] for any fixed $\xi\ge 1$ and $k=0$;
			\item[$(iii)$] for any fixed $\tau \ge 2$ and $k=0$.
		\end{itemize}
	\end{theorem}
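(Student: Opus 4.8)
The plan is to exhibit three separate polynomial-time reductions from \textsc{Clique} (or equivalently \textsc{Independent Set}), one for each of the items $(i)$, $(ii)$, $(iii)$; membership in \NP{} is immediate since a clique of size $\ell$ is a polynomial certificate. For all three reductions the main device is \emph{padding}: given an arbitrary graph $H$ in which we want to decide the existence of a $k$-clique, we build a graph $G$ by taking $H$ together with a carefully chosen gadget so that (a) the edge count of $G$ lands exactly at (or just below) the Tur\'{a}n bound $t_r(n)$ for the prescribed parameters, and (b) the largest clique in $G$ is the largest clique in $H$ plus a fixed, gadget-controlled offset. The standard padding gadget here is a complete multipartite graph (a Tur\'{a}n graph on its own vertex set) joined either completely or not-at-all to $H$; completely joining a clique-free complete multipartite graph to $H$ raises the clique number by a controlled amount while adding a predictable number of edges.

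For item $(iii)$ ($\tau\ge 2$ fixed, $k=0$): start from \textsc{Clique} on $H$ with target clique size $\omega$. We want $G$ to have exactly $t_r(n)$ edges for some $r$ with $\ell=r+\tau$, i.e.\ $\ell - r = \tau$. The idea is to let $G$ be the disjoint-or-joined union of $H$ with a large Tur\'{a}n graph $T_{r'}(n')$ chosen so that, combined with $H$'s edges, the total is exactly $t_r(n)$. Since $t_r(n)$ is a concrete quadratic expression (Proposition~\ref{lemma:turans_edges}), and adding an isolated vertex or a single edge changes $t_r(n)$ by a known bounded amount, one can tune $n$ and $r$ so the arithmetic works out; the ``extra'' clique-number offset of exactly $\tau$ is produced by joining a $(\tau-1)$-partite clique-free gadget completely to the rest. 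Here $\xi=\lfloor n/r\rfloor$ is \emph{not} fixed, which is precisely what makes this case possible even though the algorithm of Theorem~\ref{thm:tauxik} runs in FPT time when $\tau+\xi+k$ is the parameter.

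For items $(i)$ and $(ii)$ ($\xi\ge1$ fixed): now $\lfloor n/r\rfloor$ is a constant $\xi$, so $r=\Theta(n)$ and the Tur\'{a}n graph $T_r(n)$ is very dense — each part has only $\xi$ or $\xi+1$ vertices. The reduction should start from \textsc{Independent Set} on a sparse graph (bounded-degree, or with a linear number of edges) rather than \textsc{Clique}: we want $\overline{G}$ to be sparse so that $G$ itself is close to complete. Concretely, take a bounded-degree graph $H'$ on $N$ vertices where we seek an independent set of size $a$; set $G=\overline{H'}$ possibly padded with extra vertices, so $G$ has $\binom{N}{2}$ minus a linear number of edges, which for the right choice of $r$ (with $\lfloor N/r\rfloor=\xi$) is at least $t_r(N)$ when $k=0$ (item $(ii)$), or within $k$ of it for a suitable small/forced $k$ after padding (item $(i)$, where instead $\tau=0$ is forced and $k$ is left free); the clique of size $\ell$ in $G$ corresponds to the independent set of size $a$ in $H'$. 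The ETH-based refinements mentioned in the introduction would follow by tracking the sizes in these reductions, but for \NP-completeness the above suffices.

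\textbf{The main obstacle} will be the exact arithmetic bookkeeping: making $|E(G)|$ hit $t_r(n)$ (for $k=0$) or $t_r(n)-k$ on the nose while simultaneously controlling the clique number and keeping all of $n$, $r$, $\tau$, $\xi$ at their prescribed values. Because $t_r(n)$ is a somewhat delicate quadratic with a correction term depending on $n\bmod r$, one typically needs a ``rounding'' step — adding a bounded number of universal vertices, pendant edges, or an extra small Tur\'{a}n block whose edge count can be adjusted in unit steps — to absorb the discrepancy; verifying that this rounding gadget does not spuriously create a larger clique (and does not change $\xi$) is the fiddly part that the formal proof must check case by case.
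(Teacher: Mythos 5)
Your high-level plan---polynomial padding reductions from \textsc{Clique} or sparse \textsc{Independent Set} using Tur\'{a}n-type gadgets---is the right one, and for item $(ii)$ your instinct to start from bounded-degree \textsc{Independent Set} and take complements is exactly what the paper uses in its ETH reductions (via \Cref{obs:avg_degree}). However, you flag as ``the main obstacle'' something that is not actually an obstacle, and this misreading would steer you into much heavier bookkeeping than the paper's proof needs. \probTuranClique only requires $|E(G)| \ge t_r(n) - k$, an \emph{inequality}; you never need the edge count to ``hit $t_r(n)$ on the nose.'' This is precisely what makes item $(iii)$ go through cleanly: the paper takes the disjoint union of the \textsc{Clique} instance $G$ (on $n$ vertices) with a large complete $(\ell-1)$-partite graph on $N=(\ell-1)x$ vertices, carving out $n$ isolated spots for $G$, and sets $r=\ell-\tau$, $k=0$. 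It then simply picks $x$ (hence $N$) polynomially large: the loss from embedding $G$ is at most $n(\ell-2)x$, i.e.\ linear in $N$, while by \Cref{lemma:turan_diff} the slack $t_{\ell-1}(N)-t_{\ell-\tau}(N)$ grows like $\Omega\bigl(\tau\cdot(N/\ell)^2\bigr)$, i.e.\ quadratically in $N$, so the bound $|E(G')|\ge t_{\ell-\tau}(N)$ is comfortably overshot with no rounding gadget, pendant edges, or case analysis on $n\bmod r$. The clique offset is also free: the padding is disjoint from $G$ and is $K_\ell$-free, so $\omega(G')=\max(\omega(G),\ell-1)$, and you do not need your completely-joined $(\tau-1)$-partite gadget---which would in fact complicate the accounting, since a complete join adds a product number of edges.

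For item $(i)$ you also overcomplicate: there $k$ is unconstrained, so $|E(G)|\ge t_r(n)-k$ is satisfied vacuously by taking $k=\binom{|V(G)|}{2}$, and you can reduce from \emph{general} \textsc{Clique} with no sparsity requirement. The paper just pads the given instance $(G,\ell)$ with isolated vertices (raising $n'$) and universal vertices (raising $\ell'$ and $n'$ together) until $\lfloor |V(G')|/\ell'\rfloor=\xi$, then sets $r=\ell'$ (forcing $\tau=0$) and $k=\binom{|V(G')|}{2}$. In short: keep the padding idea and the sparse-\textsc{IS} complementation for $(ii)$, but drop the exact-edge-count goal entirely; each reduction needs only one inequality to hold, and \Cref{lemma:turan_diff} together with \Cref{obs:avg_degree} is what you use to verify it.
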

	\begin{proof}
		Towards proving $(i)$ and $(ii)$, we provide a reduction from \textsc{Clique}. Let $\xi\ge 1$ be a fixed constant.
		Let $(G,\ell)$ be a given instance of \textsc{Clique} and  let $n=|V(G)|$.
		We assume that $\ell\ge \xi$, otherwise we can solve $(G,\ell)$ in polynomial time.
		Construct a graph $G'$ from $G$ as follows.
		Start from $G'=G$ and $\ell'=\ell$.
		Then add $\max\{\xi \ell- n,0\}$ isolated vertices to $G'$. Note that $(G,k)$ and $(G',k')$ are equivalent and $|V(G')|\ge \xi \ell'$.
		If we have $\xi \ell' \le |V(G')|<(\xi+1)\ell'$, we are done with the construction of $G'$.
		Otherwise, repeatedly add a universal vertex to $G'$, increasing $\ell'$ by one, so $|V(G')|-(\xi+1)\ell'$ decreases by $\xi$ each time.
		We repeat this until $|V(G')|$ becomes less than $(\xi+1)\ell'$.
		Since the gap between $\xi \ell'$ and $(\xi+1)\ell'$ is at least $\xi$ at any moment, we derive that $\xi \ell' \le |V(G')| < (\xi+1)\ell'$.
		The construction of $G'$ is complete. Note that$(G',\ell')$ is an instance of \textsc{Clique} equivalent to $(G,\ell)$.
		We added at most $\max \{n,\xi \ell\}$ vertices to $G'$, hence this is a polynomial-time reduction.
		
		By the above, $\lfloor V(G')/\ell' \rfloor = \xi$, so we can reduce $(G',\ell')$ to an equivalent instance $(G',\ell',\binom{|V(G')|}{2},\ell')$ of \probTuranClique.
		Clearly, this instance has the required fixed value of $\xi$ and $\tau=0$.
		This proves $(i)$.
		For $(ii)$, we use the fact that $t_1(n)=0$ for every $n>0$ and reduce $(G',\ell')$ to $(G',1,0,\ell')$.

		To show  $(iii)$, we need another reduction from \textsc{Clique}.
		Let $\tau \ge 2$ be a fixed integer constant.
		Take an instance $(G,\ell)$ of \textsc{Clique} with $\ell\ge 2\tau$.
		We denote $n=|V(G)|$.
		To construct $G'$ from $G$, we start from a large complete $(\ell-1)$-partite graph with equal-sized parts.
		The size of each part equals $x$, so $|V(G')|=(\ell-1)x$.
		We denote $N=|V(G')|$ and choose the value of $x$ later, for now we only need that $N\ge n$.
		Clearly, $|E(G')|=t_{\ell-1}(N)$ at this point.
		To embed $G$ into $G'$, we select arbitrary $n$ vertices in $G'$ and make them isolated.
		This removes at most $n(\ell-2)x$ edges from $G'$.
		Then we identify these $n$ isolated vertices with $V(G)$ and add edges of $G$ between these vertices in $G'$ correspondingly.
		This operation does not decrease $|E(G')|$.
		This completes the  construction of $G'$.
		Since $G'$ is isomorphic to a complete $(\ell-1)$-partite graph united disjointly with $G$, we have that $(G,\ell)$ and $(G',\ell)$ are equivalent instances of \textsc{Clique}.
		
		We now want to reduce $(G',\ell)$ to an instance $(G',\ell-\tau,0,\ell)$ of \probTuranClique.
		To do so, we need $|E(G')|\ge t_{\ell-\tau}(N)$.
		By \Cref{lemma:turan_diff}, $t_{\ell-1}(N)-t_{\ell-\tau}(N)\ge C \cdot (\tau-1) \cdot \left(\frac{N}{\ell-\tau}\right)^2$ for some 
		constant 
		$C>0$.
		Since $|E(G')|\ge t_{\ell-1}(N)-n(\ell-2)x$, we want to choose $x$ such  that 
		
		
		$$n(\ell-2)x\le C \cdot (\tau-1) \cdot \left(\frac{N}{\ell-\tau}\right)^2.$$
		By substituting $N=(\ell-1)x$,  we derive that $x$ should satisfy 
		
		$$\frac{n}{C}\cdot \frac{(\ell-2)(\ell-\tau)}{(\ell-1)^2}\cdot \frac{\ell-\tau}{\tau-1}\le x.$$
		
		Now simply pick as $x$ the smallest integer that satisfies the above.
		Then $(G',\ell-\tau,0,\ell)$ is an instance of \probTuranClique~that is equivalent to the instance  $(G,k)$ of \textsc{Clique} and is constructed in polynomial time.
	\qed\end{proof}
	
	Now, recall that \Cref{thm:tauxik} gives an \FPT-algorithm for \probTuranClique~that is single-exponential in $\tau\xi^2+k$. 
	The previous theorem argues that all three of $\tau$, $\xi$, $k$ have to be in the exponential part of the running time. However, that result does not say anything about what can be the best possible dependency on these parameters.
	The next \Cref{thm:ETH} aims to give more precise lower bounds based on ETH, in particular it turns out that the dependency on $\tau$ and $k$ cannot be subexponential unless ETH fails.
	First, we need to show the relation between the parameter $\xi$ and the average degree of $\overline G$.
	
	\begin{proposition}\label{obs:avg_degree}
		Let $G$ be an $n$-vertex graph, $r\le n$ be an integer, and denote $\xi=\lfloor\frac{n}{r}\rfloor$.
		Let $\overline{G}$ denote the complement of $G$ and $\overline d$ denote the average degree of $\overline G$.
		Then $\overline d \le \xi$ if $|E(G)|\ge t_r(n)$ and  $|E(G)|\ge t_r(n)$ if $\overline d \le \xi-1$.
	\end{proposition}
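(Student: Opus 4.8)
The plan is to prove \Cref{obs:avg_degree} by a direct edge-counting argument relating the number of edges of $\overline G$ to the number of edges of $G$, and then invoking \Cref{lemma:turans_edges} (Tur\'an's edge count) to compare against $t_r(n)$. First I would record the identity $|E(G)| + |E(\overline{G})| = \binom{n}{2}$, and rewrite the average degree of $\overline G$ as $\overline d = \frac{2|E(\overline G)|}{n}$, so that $|E(\overline G)| = \frac{n\overline d}{2}$ and hence $|E(G)| = \binom{n}{2} - \frac{n\overline d}{2} = \frac{n(n-1)}{2} - \frac{n\overline d}{2}$. Thus both implications reduce to a comparison between $\frac{n(n-1)}{2} - \frac{n\overline d}{2}$ and $t_r(n)$.

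Next I would plug in the closed form of \Cref{lemma:turans_edges}, namely $t_r(n) = \left(1 - \frac1r\right)\frac{n^2}{2} - \frac{s}{2}\left(1 - \frac{s}{r}\right)$ where $s = n - r\lfloor n/r\rfloor$. The inequality $|E(G)| \ge t_r(n)$ becomes, after multiplying through by $2$ and rearranging, $\frac{n^2}{r} - n + \frac{s^2}{r} \ge n\overline d + s$, i.e. $\overline d \le \frac{n}{r} + \frac{s^2 - sr}{nr} - 1 = \frac{n}{r} - 1 - \frac{s(r-s)}{nr}$. Since $0 \le s < r$, the correction term $\frac{s(r-s)}{nr}$ is nonnegative, so whenever $|E(G)| \ge t_r(n)$ we get $\overline d \le \frac{n}{r} - 1 < \frac{n}{r}$, and since $\overline d$ need only be bounded by $\xi = \lfloor n/r \rfloor$ this certainly gives $\overline d \le \xi$ (indeed even $\overline d < \xi$ when $\xi = n/r$, and $\overline d \le \xi$ in general because $\frac{n}{r} - 1 < \lfloor n/r\rfloor + 1$; one checks $\frac nr - 1 \le \lfloor \frac nr\rfloor$). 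Conversely, for the second implication, if $\overline d \le \xi - 1 = \lfloor n/r\rfloor - 1 \le \frac{n}{r} - 1$, then running the same chain of inequalities backwards gives $|E(G)| = \frac{n(n-1)}{2} - \frac{n\overline d}{2} \ge \frac{n(n-1)}{2} - \frac{n}{2}\left(\frac nr - 1\right) = \frac{n^2}{2} - \frac{n^2}{2r}$, and I would then verify that this lower bound is at least $t_r(n)$, using that $t_r(n) \le \left(1 - \frac1r\right)\frac{n^2}{2}$ (the subtracted term $\frac s2(1 - \frac sr)$ being nonnegative since $0 \le s < r$).

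The only mild subtlety — which I expect to be the main (though still minor) obstacle — is handling the floor function carefully, i.e. keeping the distinction between $\xi = \lfloor n/r\rfloor$ and $n/r$ straight so that the off-by-one in the two directions ($\overline d \le \xi$ versus $\overline d \le \xi - 1$) comes out exactly as claimed rather than one unit too strong or too weak. Concretely, one uses $\lfloor n/r\rfloor \le n/r$ in the first implication and $\xi - 1 = \lfloor n/r\rfloor - 1 \le n/r - 1$ in the second, together with the sign of the remainder term $\frac{s(r-s)}{nr} \ge 0$; the inequalities then line up. I would close by remarking that this is exactly the asymmetry one expects: the complement of a graph with precisely $t_r(n)$ edges is $\overline{T_r(n)}$, a disjoint union of $r$ near-equal cliques, whose average degree is about $n/r - 1$, so a gap of roughly one between the two thresholds is unavoidable.
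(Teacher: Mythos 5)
Your overall plan is the same as the paper's: express $\overline d$ via $|E(\overline G)| = \binom n2 - |E(G)|$ and compare against $t_r(n)$ using \Cref{lemma:turans_edges}. The second implication is carried out correctly. However, the first implication contains a sign error that makes your intermediate bound false.

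After multiplying $|E(G)| \ge t_r(n)$ by $2$ and rearranging, the inequality is
\[
\frac{n^2}{r} - n + s - \frac{s^2}{r} \;\ge\; n\overline d, \qquad\text{i.e.}\qquad \overline d \;\le\; \frac nr - 1 + \frac{s(r-s)}{nr},
\]
with the correction term \emph{added}, not subtracted as you wrote. You can no longer just drop the (nonnegative) correction term to get $\overline d \le \frac nr - 1 \le \xi$. In fact the bound $\overline d \le \frac nr - 1$ is simply false in the extremal case: if $G=T_r(n)$ with $r\nmid n$, then $\overline G$ is a disjoint union of $s$ cliques of size $\xi+1$ and $r-s$ cliques of size $\xi$, and a short computation gives $\overline d = \xi - \frac{\xi(r-s)}{n}$, which equals $\frac nr - 1 + \frac{s(r-s)}{nr} > \frac nr - 1$. (Your own closing remark that $\overline d\bigl(\overline{T_r(n)}\bigr)$ is ``about'' $n/r-1$ is a hint that something needs checking here.) The fix is to verify directly that $\frac nr - 1 + \frac{s(r-s)}{nr} \le \xi$; writing $\frac nr = \xi + \frac sr$ this reduces to $\frac{s-r}{r} + \frac{s(r-s)}{nr} = \frac{(s-r)(n-s)}{nr} \le 0$, which holds since $s<r\le n$. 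The paper sidesteps this by using $n-s=r\xi$ to get the cleaner closed form $\binom n2 - t_r(n) = \frac{\xi}{2}\bigl(n-(r-s)\bigr)$, from which both implications drop out with no correction-term bookkeeping; you may find it cleaner to do the same.
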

		\begin{proof}
		Let $s<r$ be the remainder in the division of $n$ by $r$.
		Then
		\begin{eqnarray*}
			\binom{n}{2}-t_r(n)&=&\frac{n^2}{2}-\frac{n}{2}-\left(1-\frac{1}{r}\right)\cdot\frac{n^2}{2}+\frac{s}{2}\cdot\left(1-\frac{s}{r}\right)\\&=&\frac{n}{2}\cdot \frac{n}{r}-\frac{n}{2}+\frac{s}{2}\cdot \left(1-\frac{s}{r}\right)\\
			&=&\frac{n}{2}\cdot \xi +\frac{n}{2}\cdot \frac{s}{r}-\frac{n}{2}+\frac{s}{2}-\frac{s}{2}\cdot \frac{s}{r}=\frac{n}{2}\cdot \xi-\frac{n-s}{2}\cdot \left(1-\frac{s}{r}\right)=\\
			&=&\frac{n}{2}\cdot \xi-\frac{(n-s)(r-s)}{2r}
			=\frac{n}{2}\cdot \xi - \frac{r-s}{2}\cdot \xi. 
		\end{eqnarray*}
		Since $|E(\overline G)|=\binom{n}{2}-|E(G)|$, one direction is proved: assuming $|E(G)|\ge t_r(n)$, $|E(\overline G)|\le \binom{n}{2}-t_r(n)\le \frac{\xi n}{2}$.
		
		For the other direction, assume that $\overline d \le \xi-1$, i.e.\ $|E(\overline G)|\le \frac{n}{2}\cdot (\xi-1)$.
		As $\binom{n}{2}-t_r(n)\ge \frac{n}{2}\cdot \xi-\frac{n}{2}$, we have $\binom{n}{2}-t_r(n)\ge |E(\overline G)|$.
		Then $|E(G)|\ge t_r(n)$ follows and the proof is complete.
	\qed\end{proof}
	
	We are ready to give lower bounds for algorithms solving \probTuranClique in terms of the parameters $\tau$, $\xi$, and $k$.
	
	\begin{theorem}\label{thm:ETH}
		Unless the Exponential Time Hypothesis fails, for any function $f$ there is no $f(\xi,\tau)^{o(k)}\cdot n^{f(\xi,\tau)}$, $f(\xi,k)^{o(\tau)} \cdot n^{f(\xi,k)}$, or $f(k,\tau)^{o(\sqrt{\xi})}\cdot n^{f(k,\tau)}$ algorithm for \probTuranClique.
	\end{theorem}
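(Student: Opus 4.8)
The plan is to establish each of the three lower bounds by a reduction from a problem that is hard under ETH, choosing the base hard problem so that the target parameter scales in the desired way while the other two parameters stay bounded (by a constant or by the instance size in a controlled fashion).

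\textbf{Bound in $k$.} For the $f(\xi,\tau)^{o(k)}\cdot n^{f(\xi,\tau)}$ bound I would reduce from \textsc{Independent Set} on graphs of bounded average degree, exactly as hinted at in the text before \Cref{thm:ETH}. Recall that \textsc{Independent Set} on graphs with $O(m)=O(n)$ edges has no $2^{o(n+m)}$ algorithm under ETH (via sparsification / $3$-regular \textsc{Vertex Cover}). Given such a sparse graph $H$ on $N$ vertices with average degree bounded by a constant, view $G=\overline{H}$; by \Cref{obs:avg_degree}, picking $r$ with $\xi=\lfloor N/r\rfloor$ equal to a small constant makes $|E(G)|$ within an additive $O(N)$ of $t_r(N)$, i.e. $k=O(N)$, while $\tau=0$ (we look for a clique of size exactly $r$, corresponding to an independent set of size $N/(d+1)$, plus the "$t$" slack folded into $\tau\le 1$ or into $k$). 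Then an $f(\xi,\tau)^{o(k)}\cdot n^{f(\xi,\tau)}$ algorithm, with $\xi,\tau$ constant, would run in $2^{o(N)}\cdot N^{O(1)}$ time, contradicting ETH. The one point requiring care is matching the additive slack between $|E(G)|$ and $t_r(N)$ to the definition of $k$, and ensuring $\tau\le 1$; this is routine bookkeeping using \Cref{lemma:turans_edges}.

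\textbf{Bound in $\tau$.} For $f(\xi,k)^{o(\tau)}\cdot n^{f(\xi,k)}$ I would reduce from \textsc{Clique} (equivalently \textsc{Independent Set}), which has no $2^{o(n)}$ algorithm under ETH on $n$-vertex graphs. The idea is to reuse the construction of part $(iii)$ of \Cref{thm:TuranNPC}: embed an $n$-vertex \textsc{Clique} instance $(H,\ell)$ into a large complete $(\ell-1)$-partite host graph $G'$, reducing to $(G',\ell-\tau,0,\ell)$ with $\tau=\ell-(\ell-\tau)$ being (up to an additive constant) the solution size $\ell$ of the original instance — so $\tau=\Theta(n)$ in the worst case. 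Here $k=0$ is fixed, and $\xi=\lfloor N/(\ell-\tau)\rfloor$ can be forced to a fixed constant by choosing the part size $x$ appropriately (as in the proof of \Cref{thm:TuranNPC}$(iii)$, where $x$ is chosen to absorb the $n(\ell-2)x$ lost edges; a slightly larger $x$ also pins $\xi$). Then $f(\xi,k)^{o(\tau)}\cdot n^{f(\xi,k)} = 2^{o(n)}\cdot N^{O(1)}$ with $N=\mathrm{poly}(n)$, contradicting ETH. The obstacle is to simultaneously control $\xi$ and keep $N$ polynomial in $n$ while $\tau$ grows linearly with $n$; \Cref{lemma:turan_diff}, which quantifies $t_{\ell-1}(N)-t_{\ell-\tau}(N)=\Theta(\tau\xi^2)$, is exactly the tool that lets one verify $x=\mathrm{poly}(n)$ suffices.

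\textbf{Bound in $\sqrt{\xi}$.} The $f(k,\tau)^{o(\sqrt{\xi})}\cdot n^{f(k,\tau)}$ bound is the most delicate and I expect it to be the main obstacle. The natural route is to reduce from a problem whose ETH lower bound is of the form $2^{o(\sqrt{N})}$, namely \textsc{Clique} / \textsc{Independent Set} on \emph{planar} (or bounded-genus / $H$-minor-free) graphs, where under ETH there is no $2^{o(\sqrt{N})}\cdot N^{O(1)}$ algorithm. Given such a planar instance $(H,\ell)$ on $N$ vertices, I would build a \probTuranClique instance $G$ on $n$ vertices in which a planted gadget forces $\xi=\Theta(N)$ (so $\sqrt{\xi}=\Theta(\sqrt N)$) while $\tau$ and $k$ remain fixed constants, and such that $G$ has the required clique iff $H$ has an independent set of size $\ell$. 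Concretely one wants $G=\overline{H}$ blown up / padded so that $n/r=\Theta(N)$ with $|E(G)|$ still within a constant of $t_r(n)$ (so $k=O(1)$) and the target clique size equal to $r$ up to $O(1)$ (so $\tau=O(1)$); the padding must add $\Theta(nN)$ edges to the complement in a structured way and this is where \Cref{obs:avg_degree} and \Cref{lemma:turans_edges} are invoked to certify the edge count. Then $f(k,\tau)^{o(\sqrt\xi)}\cdot n^{f(k,\tau)}=2^{o(\sqrt N)}\cdot \mathrm{poly}(N)$, contradicting ETH. The delicate part is that padding to increase $\xi$ typically perturbs $k$; the resolution is that each added universal/isolated vertex changes $|E(G)|-t_r(n)$ in a predictable way, and one chooses the number of padding vertices to keep this difference $O(1)$, which is possible precisely because $t_r(n)$ as a function of $n$ (with $r$ scaled so that $\xi$ is the free variable) has the controlled behaviour computed in \Cref{lemma:turans_edges}.
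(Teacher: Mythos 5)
Your $k$-bound reduction is essentially what the paper does (the paper sets $r=n$ so $\xi=1$, $\tau=0$, $k=m$; you hedge about "routine bookkeeping" but the approach matches).

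For the $\tau$-bound, however, your proposal has a genuine gap. You want to reuse the construction of \Cref{thm:TuranNPC}$(iii)$ with $\tau=\Theta(\ell)$ while keeping $\xi$ a fixed constant by choosing $x$. But in that construction $N=(\ell-1)x$ and $r=\ell-\tau$, so $\xi\approx\frac{(\ell-1)x}{\ell-\tau}$. If $\tau$ is a growing fraction of $\ell$ (say $\tau=\ell/2$) the lower bound on $x$ forces $x=\Theta(n)$ and hence $\xi=\Theta(n)$; if instead $\ell-\tau=O(1)$ then $\xi\approx\ell x$ which is again unbounded. There is no regime where $\tau$ grows linearly and $\xi$ stays bounded by a constant using this construction. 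The paper avoids this entirely with a much simpler reduction: take $(\overline G, r,0,q)$ where $G$ is a sparse \textsc{Independent Set} instance with average degree at most $D$ and $r=\lfloor n/(D+1)\rfloor$. By \Cref{obs:avg_degree} this is a valid instance, and one gets $k=0$, $\xi\le 2(D+1)$ constant, and $\tau\le q\le n$ directly, with the instance size unchanged.

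For the $\sqrt{\xi}$-bound you propose a genuinely different route — starting from \textsc{Planar Independent Set} and its $2^{o(\sqrt N)}$ ETH lower bound — but the construction is not worked out and has a structural problem you do not resolve. To keep $\tau=O(1)$ you need the target clique size to be $r+O(1)$, and to make $\xi=\Theta(N)$ with that $r$ you need $n=\Theta(Nr)=\Theta(N\ell)$, which requires padding $\overline H$ up to $\Theta(N^2)$ vertices while keeping the edge count at or above $t_r(n)$; you say this requires keeping $|E(G)|$ "within a constant of $t_r(n)$", but the constraint in \probTuranClique is only $|E(G)|\ge t_r(n)-k$, so what you actually need is a lower bound, and your padding scheme is never specified. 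The paper again sidesteps all of this by reusing construction $(iii)$ with $\tau=2$ fixed and $x=\lceil n\ell/C\rceil$, which forces $\xi=O(n\ell)=O(n^2)$, i.e.\ $\sqrt\xi=O(n)$. A $f(k,\tau)^{o(\sqrt\xi)}\cdot n^{f(k,\tau)}$ algorithm then becomes a $2^{o(n)}\cdot\mathrm{poly}(n)$ algorithm for ordinary (dense) \textsc{Clique}/\textsc{Independent Set}, contradicting ETH with no appeal to planarity. Your route, even if it could be completed, would not give a stronger bound and introduces unnecessary difficulties.
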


			\begin{proof}
		It is well-known that under ETH \textsc{Independent Set} cannot be solved in $2^{o(n)}$ time on instances with linear number of edges \cite{cygan2015parameterized}.
		This is a basis of our proof: we provide several reductions from \textsc{Independent Set} with linear number of edges.
		Note that these reductions mostly repeat the reductions given in the proof of \Cref{thm:TuranNPC} but are different in terms of requirements for $\tau, \xi$ and $k$.
		In fact, we give three polynomial-time algorithms that reduce an instance $(G,q)$ of \textsc{Independent Set}, where $n=|V(G)|$ and $|E(G)|=\mathcal{O}(n)$, to an equivalent instance of \probTuranClique such that
		
		\begin{itemize}
			\item $\tau, \xi$ are constant but $k=\mathcal{O}(n)$;
			\item $\xi, k$ are constant but $\tau=\mathcal{O}(q)$;
			\item $\tau, k$ are constant but $\xi=\mathcal{O}(nq)$.
		\end{itemize}
		
		Clearly, once we show these three reductions, the proof of the theorem is complete.

		For an instance of \textsc{Independent Set} $(G,q)$ we denote $n=|V(G)|$ and $m=|E(G)|$.
		We always assume that the number of edges in $G$ is linear, so $m=\mathcal{O}(n)$ and the average degree of $G$ is not greater than some fixed constant $D$.
		We also assume that $n\ge 2(D+1)$ and $q>2$.
		
		The first reduction takes $(G,q)$ and trivially reduces it to an instance $(\overline G,n,m,q)$ of \probTuranClique.
		Note that $|E(\overline G)|=\binom{n}{2}-m=t_n(n)-m$, so this is a valid instance of the problem.
		For this instance, $\xi=1$ and $\tau=0$ but $k=m=\mathcal{O}(n)$.
		
		The second algorithm reduces $(G,q)$ to an equivalent instance $(\overline G,r,0,q)$, where $r=\lfloor\frac{n}{D+1}\rfloor$.
		As $\lfloor n/r\rfloor-1$ upper bounds the average degree of $G$, by \Cref{obs:avg_degree} we have that $|E(\overline G)|\ge t_r(n)$, so $(\overline G, r,0,q)$ is a valid instance.
		This instance has $k=0$ and $$\xi=\lfloor n/r\rfloor < n\cdot \left(\frac{n}{D+1}-1\right)^{-1}=(D+1)\cdot \frac{n}{n-(D+1)}\le 2(D+1),$$
		but $\tau\le q=\mathcal{O}(q)$.
		
		To show the last reduction, we reduce from the instance $(G,\ell)$ of \textsc{Clique} instead of $(G,q)$ of \textsc{Independent Set}, since constraints on the number of edges in $G$ are not necessary for it.
		Formally it means that we reduce from $(G,q)$ of \textsc{Independent Set} to $(\overline G,q)$ of \textsc{Clique}, and then apply reductions as required.
		Slightly abusing the notion we denote $(\overline G, q)$ by $(G,\ell)$.
		
		We adjust the last reduction from \textsc{Clique} from the proof of \Cref{thm:TuranNPC}.
		Recall that in this reduction we reduce an instance $(G,\ell)$ of \textsc{Clique} to an equivalent instance $(G',\ell)$ of \textsc{Clique} with $|V(G')|=(\ell-1)x$ for some chosen integer $x$.
		For $(G',\ell-\tau,0,\ell)$ to be a valid equivalent instance of \probTuranClique, it is enough that $x$ satisfies
		$$x \ge \frac{n}{C}\cdot \frac{\ell-\tau}{\ell - 1}\cdot \frac{\ell-\tau}{\tau-1},$$
		where the fixed constant $C>0$ comes from \Cref{lemma:turan_diff}.
		
		To show the third reduction, we pick $\tau:=2$.
		Then we choose $x:=\lceil n\ell/C \rceil$, so $(G',\ell-\tau,0,\ell)$ is a valid instance of \probTuranClique~equivalent to $(G,\ell)$.
		This instance has $k=0$ and $\tau=2$, but $\xi\le |V(G')|/\ell<x=\mathcal{O}(n\ell)$.
		\qed\end{proof}
	
	The first part of \Cref{thm:ETH} lets us observe that our $2.49^k \cdot (n +m)$-time algorithm for \probTuranClique with $\tau \le 1$ is essentially tight.
	
	\begin{corollary}\label{cor:single_tight}
		Assuming ETH, there is no $2^{o(k)} \cdot \polyn$ algorithm for \probTuranClique with $\ell \le r + 1$.
	\end{corollary}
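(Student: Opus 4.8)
The plan is to derive \Cref{cor:single_tight} directly from the first lower bound in \Cref{thm:ETH}, namely the nonexistence of an $f(\xi,\tau)^{o(k)}\cdot n^{f(\xi,\tau)}$ algorithm for \probTuranClique. First I would note that the reduction witnessing that bound (the trivial reduction from \textsc{Independent Set} with $\mathcal{O}(n)$ edges to $(\overline G, n, m, q)$) produces instances in which $\xi = 1$ and $\tau = 0$, both constants, while $k = m = \mathcal{O}(n)$. Since $\tau = 0$ in these instances, they all satisfy $\ell \le r+1$, so they lie in the regime covered by \Cref{corollary:fpt} and by the claimed statement.

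Next I would run the standard contrapositive argument. Suppose, for contradiction, that \probTuranClique with $\ell \le r+1$ admits an algorithm running in time $2^{o(k)}\cdot\polyn$. Composing this algorithm with the polynomial-time reduction above, an instance $(G,q)$ of \textsc{Independent Set} on $n$ vertices and $\mathcal{O}(n)$ edges would be solved in time $2^{o(k)}\cdot (|V(\overline G)| + |E(\overline G)|)^{\mathcal{O}(1)} = 2^{o(n)}\cdot n^{\mathcal{O}(1)} = 2^{o(n)}$, using $k = \mathcal{O}(n)$ and the fact that $\overline G$ has $\mathcal{O}(n^2)$ edges (which is still polynomial in $n$, so it is absorbed into the $2^{o(n)}$ bound). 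This contradicts the ETH-based lower bound that \textsc{Independent Set} on graphs with a linear number of edges cannot be solved in $2^{o(n)}$ time, which is the basis already invoked in the proof of \Cref{thm:ETH}.

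I do not expect any real obstacle here: the corollary is essentially a specialization of \Cref{thm:ETH} to the case where the hidden parameters $\xi$ and $\tau$ happen to be absolute constants, so the function $f$ plays no role and the statement collapses to a plain single-exponential lower bound in $k$. The only point that requires a sentence of care is confirming that the instances output by the first reduction genuinely satisfy $\ell \le r+1$: since there $r = n$ and $\ell = q \le n$, and indeed $\tau = \max\{\ell - r, 0\} = 0$, this is immediate. One might alternatively phrase the whole thing as: an algorithm as in the corollary would in particular be an $f(\xi,\tau)^{o(k)}\cdot n^{f(\xi,\tau)}$ algorithm (take $f \equiv$ the relevant constant), directly contradicting the first clause of \Cref{thm:ETH}; this one-line derivation is probably the cleanest way to present it.
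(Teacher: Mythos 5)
Your argument is correct and matches the paper's intent: the corollary is read off from the first reduction in the proof of \Cref{thm:ETH}, which produces \probTuranClique instances with $\tau=0$ (hence $\ell\le r+1$), $\xi=1$, and $k=m=\mathcal{O}(n)$, so a $2^{o(k)}\cdot\polyn$ algorithm for the $\ell\le r+1$ case would solve sparse \textsc{Independent Set} in $2^{o(n)}$ time. One small caveat on your proposed ``one-line derivation'': \Cref{thm:ETH} as stated rules out algorithms for the \emph{general} \probTuranClique, whereas the corollary concerns the restricted $\ell\le r+1$ variant, so one cannot merely instantiate $f$ with a constant---one genuinely must inspect the reduction and verify that its output instances already lie in the restricted regime, which is precisely the ``sentence of care'' you correctly identify.
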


	\section{Conclusion}\label{sec:conclustion}
	We conclude by summarizing   natural questions left open by our work.  \Cref{thm:ETH} 
	rules out (unless ETH fails) algorithms with running times subexponential in $\tau $ and $k$. However, when it comes to $\xi$,  the  dependency
	in the   upper bound of \Cref{cor:xialgorithm}  is  $2^{\mathcal{O}(\tau\xi^2+k)}  \cdot\polyn$, while \Cref{thm:ETH} only rules out the running time of $f(k,\tau)^{o(\sqrt{\xi})}\cdot n^{f(k,\tau)}$ under ETH.
	Thus, whether the correct dependence in $\xi$ is single-exponential or subexponential, is left open.
	%
	%
	%
	%
	%
	%
	Similarly,  the question whether \probTuranClique admits a compression into \textsc{Clique} whose size is linear in  
	$\xi$, $\tau$, and $k$, is open. A weaker variant of this question (for the case $k=0$) for \probTuranIS, whether it admits a  compression or kernel linear in $d$ and in $t$,  is also open.
	
	%
	%
	%
	%
	%
	
	\bibliography{book_kernels_fvf}  
\end{document}